\newcommand{\bcdot}{\boldsymbol{\cdot}}
\newtheorem{theorem}{Proposition}[section]
\theoremstyle{definition}
\newcommand{\dket}[1]{|#1\rangle\!\rangle} % double ket
\newcommand{\dbra}[1]{\langle\!\langle#1|} % double bra
\newcommand{\dbraket}[1]{\langle\!\langle#1\rangle\!\rangle} % double braket
\newcommand{\tr}{\mathrm{tr}} %trace
\begin{document}
%%%%%%%%%%%%%%%%%%%%%%%%%%%%%%%%%%%%%%%%%%%%%%%%%%%%%%%%%%%%%%%%%%%%%%%%%%
%\newcommand{\figdir}{.}
\newcommand{\figdir}{figures}
%%%%%%%%%%%%%%%%%%%%%%%%%%%%%%%%%%%%%%%%%%%%%%%%%%%%%%%%%%%%%%%%%%%%%%%%%%
%%%affiliations
\newcommand{\liege}{Institut de Physique Nucléaire, Atomique et de Spectroscopie, CESAM, Universit\'e de Li\`ege, 4000 Liège, Belgium}
\title{Spectral theory of non-Markovian dissipative phase transitions}

\author{Baptiste Debecker}
\affiliation{\liege}
\author{John Martin}
\affiliation{\liege}
\author{François Damanet}
\affiliation{\liege}

\begin{abstract}
Dissipative phase transitions in quantum systems have been largely studied under the so-called Markovian approximation, where the environments to which the systems are coupled are memoryless. Here, we present a generalization of the spectral theory of dissipative phase transitions to non-Markovian systems, encompassing a much broader class of quantum materials and experiments and opening many possibilities for non-Markovian engineering of matter phases such as, as explored in the companion Letter~[Debecker \emph{et. al.}, Phys.\ Rev.\ Lett.\ \textbf{133}, 140403 (2024)], reshaping of phase boundaries and triggering of phase transitions. We first prove several statements about the connections between the spectrum of the generator of the non-Markovian dynamics of general systems and dissipative phase transitions. Then, as a benchmark, we show that our framework can capture \textit{all} the expected signatures of the superradiant phase transition appearing in a challenging $\mathbb{U}(1)$-symmetric two-mode Dicke model from a reduced description of the dynamics of the atoms only, a task for which all other methods have failed so far. 
\end{abstract}
\date{\today}
\maketitle

\section{Introduction}

Understanding how phase transitions emerge in quantum systems is one of the most important forefront in physics, as it could help finding new routes to fabricate superconductors at higher temperatures, to cite but one. In this context, exploring how driven-dissipative mechanisms obtained via the coupling of the quantum systems to engineered environments and fields constitute an interesting research direction, as it has been shown this could yield to matter phases otherwise inaccessible ~\cite{Jin2016Cluster, Lee2013Unconventional}, such as phases with long-range order in 2D~\cite{Toner1998Flocks}, which is forbidden at equilibrium~\cite{Mermin1966}.

Phase transitions triggered out of equilibrium, i.e., dissipative phase transitions (DPTs), have been observed in various experiments~\cite{Baumann2010Apr, Klinder2015Dynamical, Fitzpatrick2017Observation, Fink2017Observation, Fink2017Signatures, Benary2022}. On the theory side, DPTs have mostly been studied under the so-called Markovian approximation, i.e., when quantum systems are coupled to memoryless reservoirs~\cite{Minganti2018Spectral, Kessler2012Dissipative, Hwang2018Dissipative}. One of the main reasons explaining this is that writing the equations of motion in the Markovian limit does not require knowing the spectrum of the quantum systems, a task which is particularly difficult to achieve for large, many-body systems. 

However, most realistic experimental platforms and quantum materials are coupled to reservoirs with finite memory~\cite{deVega2017Dynamics}, inducing non-Markovian effects which can lead to interesting phenomena~\cite{Huelga2012Non, Maier2019Environment, Link2022, Chen2023C, Otterpohl2022Hidden}. Non-Markovian effects also appear when one derives reduced descriptions of a large Markovian open quantum system to deal with a smaller Hilbert space~\cite{Damanet2019, Palacino2020, Link2022}. A paradigmatic example for this are atoms trapped in a lossy cavity QED, where the atoms and the cavity modes constitute an enlarged Markovian system coupled to the outside electromagnetic field. In these systems, one is usually mostly interested in the behavior of the atomic dynamics, and would thus strongly benefit, also from a computational perspective, from a reduced description of it, especially in the multimode cavity case for which the size of the Hilbert space becomes quickly intractable~\cite{Kollar2015, Gopalakrishnan2009, Gopalakrishnan2010, Gopalakrishnan2011, Kollar2017Feb, Torggler2019quantumnqueens, Vaidya2018Tunable, Guo2019Emergent, Guo2019Sign, Marsh2021Enhancing, Fiorelli2020}. To derive reduced descriptions of atoms in this context, one usually relies on the adiabatic elimination of the cavity modes. However, it has been shown that precautions must be taken in the approximations performed in the models to capture the correct behavior of the system, as compared for the superradiant phase transitions appearing in driven-dissipative Dicke models~\cite{Damanet2019, Jager2022Lindblad, Palacino2020}. This highlights the need for a systematic framework to characterize DPTs in arbitrary systems.

Here, we present a general method to characterize signatures of DPTs in non-Markovian systems, opening possibilities for exploring DPTs in a wider range of setups. Our approach is based on the Hierarchical Equations of Motion (HEOM)~\cite{Tanimura89, Ishizaki2005, Ishizaki2009, Yoshitaka2020, Lambert2020Qutip,Huang2023heomjl}, a numerical method for non-Markovian dynamics widely used in quantum physics and chemistry. From these equations of motion, one can define a generalization of the Liouvillian usually associated with the Lindblad master equation for Markovian systems, whose spectral properties are connected to DPTs. One of the necessary conditions for DPTs is the closing of the Liouvillian gap~\cite{Minganti2018Spectral}: Here we show how to define a similar quantity for non-Markovian systems.

Other techniques to study the effects of non-Markovianity on DPTs have been used, such as Green functions to study the
impact of the environment spectral density on critical
exponents~\cite{Nagy2015Nonequilibrium, Nagy2016Critical}, Lindblad master equations with time-dependent
rates to characterize the dynamics of a probe
coupled to a non-Markovian environment~\cite{Haikka2012NonMarkovianity}, or time evolving
matrix product operators (TEMPO) to localize
DPT in quantum systems coupled to a non-Markovian harmonic environment~\cite{Strathearn2018Efficient}. Most of them focus however on the paradigmatic spin-boson model~\cite{Anders2007, Chin2011Generalized, Florens2010Quantum}. As our approach is the natural extension of the powerful spectral machinery widely used for Markovian systems, it provides an ideal framework to explore non-Markovian effects in new regimes and for more realistic systems and experiments.

This paper presents all the technical details of our spectral theory of non-Markovian dissipative phase transitions as well as a few benchmarks. It goes along the companion Letter~\cite{Debecker2024PRL} which focuses on the application of the framework to highlight new physics and opportunities for dissipation engineering of DPTs, as it shows how non-Markovian effects can be used to reshape phase boundaries but also trigger phase transitions where Lindblad master equations do not.

The paper is organized as follows. In Sec.~II, we first present the generalization of the Liouvillian for non-Markovian systems, i.e., the HEOM Liouvillian, and its properties. In Sec.~III, we present our spectral theory of non-Markovian DPTs, by demonstrating how the spectral properties of the HEOM Liouvillian connect to DPTs and symmetries. In Sec.~IV, we illustrate how our framework can be used to understand matter phases from the decomposition of the steady states for the $1^{\mathrm{st}}$- and $2^{\mathrm{nd}}$-order DPT addressed in our companion Letter~\cite{Debecker2024PRL}. Finally, in Sec.~V, we show that our theory can be used to capture all the expected features of a $2^{\mathrm{nd}}$-order DPT associated with a continuous symmetry in a challenging two-mode Dicke model for which all previous reduced descriptions had failed so far~\cite{Palacino2020}.

\section{Generator of non-Markovian dynamics}

In this Section, we first introduce the generator of non-Markovian dynamics that is the centerpiece of our spectral theory of DPTs. We then discuss its properties and finally the computational advantage of using it over standard Markovian embedding of non-Markovian systems.

\subsection{HEOM Liouvillian}
We consider an open quantum system $S$ of Hamiltonian $H_S$ interacting linearly through $H_\mathrm{int}$ with $N_E$ environments of Hamiltonian $H_E$ made of a collection of bosonic modes. Setting $\hbar = 1$, the total Hamiltonian reads
\begin{equation}\label{Htot}
 \begin{split}
    H &= H_S + H_E + H_\mathrm{int}\\
    &= H_S + \sum_{l = 1}^{N_E}\sum_{k} \omega_{l k} b^\dagger_{l k} b_{l k} + \sum_{l = 1}^{N_E}\sum_{k} \left(g_{l k} b_{l k} L_l^\dagger + g^*_{l k} b_{l k}^\dagger L_l\right),
 \end{split}
\end{equation}
where $b_{l k}$ ($b_{l k}^\dagger$) is the annihilation (creation) operator associated with the $k$-th bosonic mode  of frequency $\omega_{l k}$ of the $l$-th bath and $g_{l k}$ is a coupling constant characterizing the strength of the coupling between the system and the mode $k$ of the bath $l$. For each bath, the interaction is mediated by an arbitrary system operator $L_l$.

At zero temperature, environments are characterized by correlation functions $\alpha_l(\tau)$ of the form
\begin{equation}
    \alpha_l(\tau) = \sum_k |g_{l k}|^2 e^{-i\omega_{l k} \tau}.
\end{equation}
Note however that the formalism presented here is easily generalizable to bosonic or fermionic
baths at finite temperatures, by employing the so-called thermofield method~\cite{Semenoff1983}, for example.
Now, we suppose that each correlation function $\alpha_l$ can be written, either exactly or approximately~\cite{Meier1999Non,Ritschel2014, Hartmann2019, Lambert2020Qutip}, as a sum of $M_l$ decaying exponentials, i.e., 
\begin{equation}
    \alpha_l(\tau) =  \sum_{j=1}^{M_l} G_{l j}\, e^{-i\omega_{l j} \tau - \kappa_{l j} |\tau| },\quad \kappa_{l j}, \omega_ {l j} \in \mathbb{R},\; G_{l j} \in \mathbb{C}.
    \label{CF}
\end{equation}
This decomposition, usually performed with $G_{lj} \in \mathbb{R}$, is the so-called pseudomode picture~\cite{Imamoglu1994Stochastic, Dalton2001Theory, Garraway1997Nonperturbative, Pleasance2020Generalized, Mazzola2009Pseudomodes, Yang2012Nonadiabatic,Breuer2004Genuine, Barchielli2010Stochastic, Link2022}. It amounts to decomposing each environment $l$ into a set of $M_l$ independent modes of frequencies $\omega_{lj}$ damped with rates $\kappa_{lj}$, which is relevant for many experimental setups such as atoms in cavities, superconducting qubits coupled to resonators~\cite{Schmidt2013CircuitQED, Blais2021Circuit}, electrons-phonons systems~\cite{Flannigan2022Manybody, Moroder2022Strongly}, or emitters in plasmonic cavities~\cite{Santhosh2015Vacuum}.

A common strategy to compute the dynamics of the non-Markovian system $S$ coupled to such damped pseudomodes consists of including the pseudomode degrees of freedom in the system description. This defines an enlarged Markovian system $S_M$ described by a density operator 
$\rho_\mathrm{tot}$ whose dynamics is governed by a standard Markovian master equation of the form
\begin{equation}\label{MEMArkov}
\frac{d\rho_\mathrm{tot}}{dt} = - i \left[H_\mathrm{tot}, \rho_\mathrm{tot}\right]  + \sum_{l= 1}^{N_E}\sum_{j = 1}^{M_l}\kappa_{lj} \mathcal{D}_{a_{lj}}[\rho_\mathrm{tot}] \equiv \mathcal{L}_\mathrm{M}[\rho_\mathrm{tot}],
\end{equation}
where 
\begin{equation}
    H_\mathrm{tot} = H_S +  \sum_{l= 1}^{N_E}\sum_{j = 1}^{M_l}\left[ \omega_{lj} a_{lj}^\dagger a_{lj} + (G_{lj}a^\dagger_{lj} L_{l} + G_{lj}^* L_{l}^\dagger a_{lj}) \right]
\end{equation}
is the ``system + pseudomodes'' Hamiltonian and $\mathcal{D}_o[\boldsymbol{\cdot}] = 2o\boldsymbol{\cdot} o^\dagger - \{o^\dagger o, \boldsymbol{\cdot}\}$ the dissipator inducing the damping of the pseudomodes, where $a_{lj}$ is the annihilation operator of the $j^{\mathrm{th}}$ pseudomode of the bath $l$. The superoperator $\mathcal{L}_\mathrm{M}$ defined in Eq.~(\ref{MEMArkov}) is a standard Markovian Liouvillian and constitutes the generator of the dynamics of the Markovian system $S_M$. The connections between its spectral properties and dissipative phase transitions have been studied in~\cite{Minganti2018Spectral}. 

In this paper, we employ a different approach and derive a spectral theory of non-Markovian DPTs directly based on the generator of the dynamics of the non-Markovian system $S$ described below. When the global system is initially in the state $\rho_\mathrm{tot}(0) = \rho_S(0) \otimes \rho_B(0)$, the exact dynamics of $S$ can indeed be described by a HEOM which takes the form~\cite{Tanimura89, Ishizaki2005, Ishizaki2009, Yoshitaka2020, Lambert2020Qutip}
\begin{equation} \label{HEOM}
\begin{aligned}
    \frac{d\rho^{(\vec{n}, \vec{m})}}{dt} &= -i[H_S, \rho^{(\vec{n}, \vec{m})}] - (\vec{w}^*\bcdot\vec{n} + \vec{w}\bcdot\vec{m}) \rho^{(\vec{n}, \vec{m})} \\
    &+ \sum_{l = 1}^{N_E}\sum_{j = 1}^{M_{l}}\left\{ G_{lj} n_{lj} L_l \rho^{(\vec{n}-\vec{e}_{lj}, \vec{m})} + G_{lj}^* m_{lj} \rho^{(\vec{n}, \vec{m}-\vec{e}_{lj})}L_{l}^\dagger \right. \\
    &\qquad \quad  + \left.[\rho^{(\vec{n}+\vec{e}_{lj}, \vec{m})}, L_l^\dagger] + [L_l, \rho^{(\vec{n}, \vec{m}+\vec{e}_{lj})}]\right\},
\end{aligned}
\end{equation}
with $\vec{n} = (n_{lj})$ and $\vec{m}= (m_{lj})$ vectorized sets of multi-indices in $\mathbb{N}^M$ with $M = \sum_l M_l$ the total number of pseudomodes~\cite{footnote_multiindices}, $\vec{w} = (\kappa_{lj} + i\, \omega_{lj}) \in \mathbb{C}^M$, $\vec{e}_{lj} = (\delta_{jj'}\delta_{ll'})$ unit vectors, and $\vec{a}\bcdot \vec{b} = \sum_{lj} a_{lj}^* b_{lj}$ the inner product on $\mathbb{C}^M$. Equation~(\ref{HEOM}) is a linear set of coupled master equations for the set of operators $\{\rho^{(\vec{\vec{n}}, \vec{m})}\}$, where $\rho^{(\vec{0}, \vec{0})} \equiv \rho_S$ corresponds to the physical density operator of the system $S$ with which all the expectation values of system observables are computed. The operators $\rho^{(\vec{n}, \vec{m})}$ for $(\vec{n}, \vec{m}) \neq (\vec{0},\vec{0})$ also act on the system space and are called auxiliary states in the literature (even if they are generally not Hermitian and of unit trace). From these auxiliary states, which encode the build-up of correlations between the system and the environment, bath correlation functions can be obtained, as shown in Sec.~II.C. For $t = 0$, we have $\rho^{(\vec{0}, \vec{0})}(0) = \rho_S(0)$ and $\rho^{(\vec{n}, \vec{m})}(0) = 0$ for $(\vec{n}, \vec{m}) \neq (\vec{0},\vec{0})$. As time evolves, the auxiliary states become progressively populated and are likely to influence the dynamics of the physical state at later times, acting in this way as a memory kernel for the system.
Although the hierarchy is formally infinite, it can be truncated at large hierarchy depth indices $\vec{n}$ and $\vec{m}$, typically because some auxiliary states stay negligibly populated at all times.
In practice, the stronger the non-Markovianity, the larger the number of auxiliary states we need to retain to obtain convergence of the results. Here, we choose the triangular truncation $\rho^{(\vec{n}, \vec{m})} = 0 \; \forall \:\vec{n},\vec{m}: \sum_{l,j} (n_{lj} + m_{lj}) > k_\mathrm{max}$, where $k_\mathrm{max}$ is the truncation order, giving a total of
\begin{equation}\label{K}
    K = \frac{(2M+k_{\mathrm{max}})!}{(2M)!\, k_{\mathrm{max}}!}
\end{equation}
auxiliary states~\cite{Link2022}.

Defining $\rho$ as the vector that gathers all the operators $\rho^{(\vec n, \vec m)}$, Eq.~(\ref{HEOM}) is formally equivalent to 
\begin{equation}
    \frac{d\rho}{dt} = \mathcal{L}_\mathrm{HEOM}[\rho],
\end{equation}
where we introduced the HEOM Liouvillian $\mathcal{L}_\mathrm{HEOM}$, i.e., the superoperator that generates the dynamics of the physical density operator $\rho_S = \rho^{(\vec0, \vec0)}$ and all the auxiliary operators $\rho^{(\vec n, \vec m )}$. 
Moreover, the Choi-Jamiolkowsi isomorphism directly offers a matrix representation of the HEOM generator 
\begin{equation}
    \frac{d\dket{\rho}}{dt} = \mathcal{L}_{\mathrm{HEOM}}(k_{\mathrm{max}})\,\dket{\rho},
    \label{def_LHEOM}
\end{equation}
with $\dket{\rho}$ a stacked vector containing all the vectorized versions $\dket{\rho^{(\vec n, \vec m)}}$ of the matrices $\rho^{(\vec{n}, \vec{m})}$, while the notation $\mathcal{L}_\mathrm{HEOM}(k_\mathrm{max})$ highlights the $k_\mathrm{max}$ dependency of $\mathcal{L}_\mathrm{HEOM}$. Note that we keep the same notations for the superoperator $\mathcal{L}_\mathrm{HEOM}$ and its matrix representation; the context should leave no room for ambiguity. Explicit forms of the HEOM Liouvillian for different $k_\mathrm{max}$ are given in Appendix~\ref{appExplicitform}. 

\subsection{Spectral properties of the HEOM Liouvillian}
\label{Spectral_Properties}
The HEOM Liouvillian $\mathcal{L}_\mathrm{HEOM}(k_\mathrm{max})$ is a linear and, in general, non-Hermitian superoperator. In the following, we assume that it is diagonalizable and denote its eigenvalues by $\lambda_i$ and the corresponding right eigenvectors by $\dket{\rho_i}$, that is, 
\begin{equation}
    \mathcal{L}_\mathrm{HEOM}(k_\mathrm{max}) \dket{\rho_i} = \lambda_i \dket{\rho_i}.
\end{equation}
For a truncation order $k_{\mathrm{max}}$, its dimension is $
D = K\,\mathrm{dim(\mathcal{H}_S})^2$ with $K$ given in Eq.~\eqref{K}. For any value of $k_\mathrm{max}\in \mathbb{N}$, the following properties for $\mathcal{L}_\mathrm{HEOM}(k_\mathrm{max})$ hold: 
\begin{enumerate}[label = (\roman*)] 
    \item its spectrum is symmetric with respect to the real axis;
    \item it preserves the trace of the physical state $\rho^{(\vec{0},\vec{0})}$;
    \item the $0$ eigenvalue is always in its spectrum, which guarantees the existence of a stationary state. 
\end{enumerate}
Furthermore, owing to the exactness of the generator $\mathcal{L}_\mathrm{HEOM}(k_\mathrm{max})$ in the limit $k_\mathrm{max} \rightarrow + \infty$, we have that
\begin{enumerate}[label = (\roman*)]
    \setcounter{enumi}{3}
    \item all the eigenvalues have a negative real part;
    \item $\mathrm{Tr}[\mathds{1}^{(\vec{0}, \vec{0})}\rho_i] = 0$ with $\mathds{1}^{(\vec{0}, \vec{0})}$ the projector onto the physical state space and with $\rho_i$ a right eigenoperator associated with the eigenvalue $\lambda_i$, with $\mathrm{Re}[\lambda_i] \neq 0$.
\end{enumerate}
All these spectral properties are proved in Appendix~\ref{Spectral_Proofs}. As in~\cite{Minganti2018Spectral, Kessler2012Dissipative}, we order the eigenvalues of $\mathcal{L}_{\mathrm{HEOM}}$ by their real part  so that
$|\mathrm{Re}[\lambda_0]| < |\mathrm{Re}[\lambda_1]| < \dots < |\mathrm{Re}[\lambda_D]|$, where $\lambda_0=0$.

\subsection{Advantage of $\mathcal{L}_{\mathrm{HEOM}}$ over Markovian embedding}

\begin{figure*}
    \centering
\includegraphics[width=0.975\textwidth]{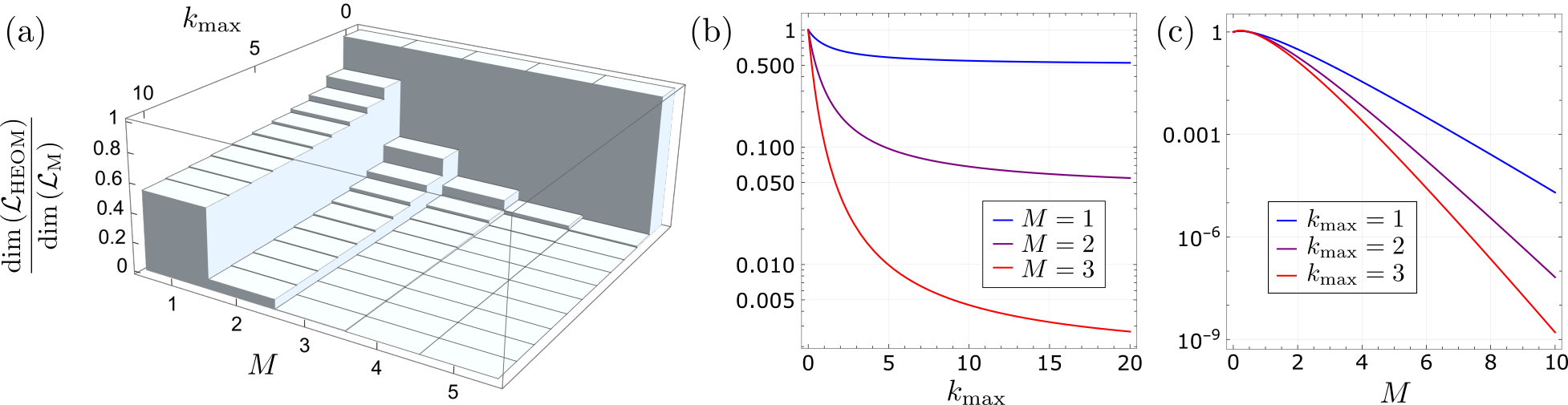}
    \caption{Overall comparison between the dimensions of $\mathcal{L}_\mathrm{HEOM}$ [Eq.~(\ref{dimLHEOM})] and $\mathcal{L}_\mathrm{M}$ [Eq.~(\ref{dimLM})]: ratios $\mathrm{dim}\left(\mathcal{L}_\mathrm{HEOM}\right)/\mathrm{dim}\left(\mathcal{L}_\mathrm{M} \right)$ as a function $k_\mathrm{max}$ and $M$ (a), as a function of $k_\mathrm{max}$ for $M = 1,2$ and $3$ (b), and as a function of $M$ for $k_\mathrm{max} = 1,2$ and $3$ (c). Since $\mathrm{dim}\left(\mathcal{L}_\mathrm{HEOM}\right)/\mathrm{dim}\left(\mathcal{L}_\mathrm{M} \right) < 1$, this means that we need less computational memory to store $\mathcal{L}_\mathrm{HEOM}$ than $\mathcal{L}_\mathrm{M}$.}
    \label{comparison1}
\end{figure*}
Before moving on to the spectral theory of non-Markovian DPTs, it is worth comparing the numerical effort of the HEOM method [Eq.~(\ref{def_LHEOM})] with the natural Markovian embedding that includes the pseudomodes [Eq.~(\ref{MEMArkov})]. This provides an overall idea of what kind of computational advantage can be expected when using $\mathcal{L}_\mathrm{HEOM}$ instead of $\mathcal{L}_\mathrm{M}$. A more detailed comparison can be found in Appendix~\ref{Numerical_Efficiency}.

The dimension of the matrix representing $\mathcal{L}_\mathrm{HEOM}$ is given by
\begin{equation}\label{dimLHEOM}
\mathrm{dim}\left(\mathcal{L}_\mathrm{HEOM} \right) = \frac{(2M+k_{\mathrm{max}})!}{(2M)! k_{\mathrm{max}}!}\mathrm{dim(\mathcal{H}_S})^2
\end{equation}
and depends on the size of the system Hilbert space $\mathcal{H}_S$, the truncation order $k_\mathrm{max}$ and the number of pseudomodes $M$ which determines the number of auxiliary matrices in the hierarchy.

To compare with the dimension of the matrix representing the Liouvillian of the enlarged Markovian system $\mathcal{L}_\mathrm{M}$, we need to introduce a cutoff $N_c$ for the pseudomode Fock spaces $\left\{|n_i\rangle\right\}$ ($n_i = 0,1,\dotsc,\infty$ and $i = 1,2, \dotsc,M$), which are in principle of infinite dimension. We choose here $N_c = k_\mathrm{max}$, motivated by the fact that the pseudomode correlation functions are related to the traces of the auxiliary matrices according to (for $M = 1$)~\cite{Link2022}
\begin{equation}
    \langle a^n (a^\dagger)^m \rangle (t) = \frac{\mathrm{Tr}\left[ \rho^{(n,m)}(t)\right]}{(i G)^n (-i G)^m},
\end{equation}
which means that if we truncate the hierarchy at $k_\mathrm{max}$, we must at least truncate the pseudomode Fock space at $N_c = k_\mathrm{max}$ to be able to compute the same correlations. The dimension of $\mathcal{L}_\mathrm{M}$ should thus be
\begin{equation}\label{dimLM}
\mathrm{dim}\left(\mathcal{L}_\mathrm{M} \right) = \mathrm{dim(\mathcal{H}_S})^2 (k_\mathrm{max}+1)^M.
\end{equation}
The ratio $\mathrm{dim}\left(\mathcal{L}_\mathrm{HEOM} \right)/\mathrm{dim}\left(\mathcal{L}_\mathrm{M} \right)$ is plotted as a function of $k_\mathrm{max}$ and $M$ in Fig.~\ref{comparison1}. We can see that the advantage in terms of dimension can be significant, especially for large numbers of pseudomodes.

\section{Non-Markovian spectral theory of DPTs}
\subsection{Definition of DPTs}
As for the Markovian case~\cite{Minganti2018Spectral}, we define a phase transition as the emergence of a nonanalytic behavior of the steady state as some control parameter $g$ is varied. More precisely, for any system described by Eq.~\eqref{def_LHEOM} which supports a valid thermodynamic limit $N \to \infty$ and possesses a unique steady state $\rho_{ss}$ for all finite $N$, if 
 \begin{equation}\label{defDPT}
     \lim_{g\to g_c} \left|\frac{\partial^p}{\partial g^p} \lim_{N\to +\infty} \langle O \rangle_{ss}\right| = +\infty,
 \end{equation}
where $O$ is a $g$-independent system observable and $\langle O \rangle_{ss} = \mathrm{Tr}[O\rho_{ss}^{(\vec{0},\vec{0})}]$, we say that the system undergoes a phase transition of order $p$.  We call $g_c$ the critical point or critical value of the control parameter. We stress that we \textit{always} assume that the steady state is unique at finite $N$.\\ 

\subsection{Weak symmetries and spontaneous symmetry breaking}\label{sec:symSSB}
Here we define a \textit{weak symmetry} of $\mathcal{L}_{\mathrm{HEOM}}$ as a unitary superoperator $\mathcal{U}$ which commutes with $\mathcal{L}_{\mathrm{HEOM}}$, i.e.\
\begin{equation}
    [\mathcal{L}_{\mathrm{HEOM}}, \mathcal{U}] =0.
    \label{def_U}
\end{equation}
In the eigenvector basis of $\mathcal{U}$, $\mathcal{L}_\mathrm{HEOM}$ is block-diagonal and we can write
\begin{equation}\label{blockLHEOM}
    \mathcal{L}_{\mathrm{HEOM}} = \bigoplus_ {u_k} \mathcal{L}_{u_k},
\end{equation}
where each block $\mathcal{L}_{u_k}$ is associated with distinct eigenvalues $u_k$ of $\mathcal{U}$ where $k \in \{0, 1,\dotsc\}$. The \textit{symmetry sector} $L_{u_k}$ is defined as the subspace spanned by the eigenvectors of $\mathcal{U}$ associated with the eigenvalue $u_k$. As for the Markovian case~\cite{Minganti2018Spectral}, the steady state always belong to the symmetry sector $L_{u_0 = 1}$. Indeed, by definition of the steady state, we have $\mathcal{L}_{\mathrm{HEOM}} \dket{\rho_{ss}} = 0$, from which we get $
\mathcal{U}\mathcal{L}_{\mathrm{HEOM}}\dket{\rho_{ss}} = \mathcal{L}_{\mathrm{HEOM}}\mathcal{U}\dket{\rho_{ss}} =0$ by definition of weak symmetry [Eq.~\eqref{def_U}]. Since $\dket{\rho_{ss}}$ is unique by hypothesis, we must have $\mathcal{U}\dket{\rho_{ss}} = \dket{\rho_{ss}}$, which shows that $\dket{\rho_{ss}}\in L_{u_0=1}$.\\

Spontaneous symmetry breaking (SSB) is defined as the emergence of a zero eigenvalue in each symmetry sector $k \neq 0$ in the limit $N \to \infty$. Specifically, if a certain weak symmetry $\mathcal{U}$ gives rise to $n+1$ symmetry sectors $L_{u_k}$ with $k\in\{0, 1, ..., n\}$, then a SSB occurs when there exists in each symmetry sector an eigenvalue that converges to zero in the thermodynamic limit, i.e., 
\begin{equation}
    \lim_{N \rightarrow + \infty}\lambda_0^{(k)} \rightarrow 0, \quad k\in\{1, ..., n\},
\end{equation}
where we sorted the eigenvalues in each symmetry sector $L_{u_k}$ as follows 
\begin{equation}
     |\mathrm{Re}[\lambda_0^{(k)}]| < |\mathrm{Re}[\lambda_1^{(k)}]| < \dots
\end{equation}
This means that the independent hierarchies associated with each block $k$, forced to be independent by the very existence of the weak symmetry, merge in the thermodynamic limit $N\rightarrow +\infty$. Consequently, steady states that explicitly break the symmetry emerge when $N \rightarrow +\infty$.

\subsection{Spectral theory of non-Markovian DPTs}\label{sec::Spectral theory}
In this section, we generalize the Markovian theory of DPTs~\cite{Minganti2018Spectral} to the non-Markovian regime. We discuss how the properties of the HEOM Liouvillian are connected to $1^{\mathrm{st}}$-order DPTs and to $2^{\mathrm{nd}}$-order DPTs associated with spontaneous symmetry breaking (SSB).

\subsubsection{$1^{\mathrm{st}}$-order DPTs}

Since $1^{\mathrm{st}}$-order DPTs are independent of symmetries (and thus of SSB), we consider below no particular symmetry and simply label the eigenvectors and eigenvalues of $\mathcal{L}_\mathrm{HEOM}$ as $\rho_i$ and $\lambda_i$. In the case of $1^{\mathrm{st}}$-order DPTs emerging in systems with symmetries, the following results must be understood as related to the Liouvillian block associated with the symmetry sector containing the steady state, i.e., the block $\mathcal{L}_{u_0 = 1}$ of the decomposition \eqref{blockLHEOM}, so that $\rho_i$ and $\lambda_i$ below must be understood simply as $\rho_i^{(0)}$ and $\lambda_i^{(0)}$. This is exemplified in the companion paper~\cite{Debecker2024PRL}, where we show a first-order DPT occurring concurrently with an SSB.

Due to the spectral properties (i)-(v) of~Section~\ref{Spectral_Properties} and their similarity to the Markovian case~\cite{Minganti2018Spectral}, it can be shown that a $1^{\mathrm{st}}$-order DPT can occur if and only if the HEOM Liouvillian gap $\mathrm{Re}[\lambda_1]$ vanishes at the critical point in the thermodynamic limit. Moreover, $\mathrm{Im}[\lambda_1]$ must vanish in a finite domain around the critical point. We prove below these statements in the limit $k_\mathrm{max} \rightarrow + \infty$. The proofs closely follow Ref.~\cite{Minganti2018Spectral} which itself relies on the results of Kato~\cite{Kato} and ~\cite{Macieszczak2016T}. We start with a definition of a key superoperator.\\

Let $\rho_i$ be a right eigenoperator of $\mathcal{L}_\mathrm{HEOM}$. We define the superoperator $\mathcal{P}^{(\vec0, \vec0)}$ through
  \begin{equation}
     \mathcal{P}^{(\vec0, \vec0)} \rho_i = \rho_i^{(\vec0, \vec0)} \quad\Leftrightarrow\quad \dket{\mathcal{P}^{(\vec0, \vec0)} \rho_i} = \dket{\rho_i^{(\vec0, \vec0)}},
  \end{equation}
  i.e., $\mathcal{P}^{(\vec0, \vec0)}$ only selects the component of $\rho_i$ in the physical sector $(\vec0, \vec0)$, that is the operator $\rho_i^{(\vec0, \vec0)}$ acting on a space of dimension $\mathrm{dim}\left(\mathcal{H}_\mathrm{S} \right)$. Note that $\mathcal{P}^{(\vec0, \vec0)}$ is different from $\mathds{1}^{(\vec0, \vec0)}$, as $\dket{\mathds{1}^{(\vec0, \vec0)} \rho_i}$ corresponds to the stacked vector $(\dket{\rho_i^{(\vec0, \vec0)}}, \dket{0}, \dket{0}, \dotsc)^T$. We are now in a position to prove that a first order DPT implies the vanishing of the gap exactly at the critical point.

\begin{theorem}\label{prop1}
If a physical system undergoes a $1^{\mathrm{st}}$-order DPT in a well-defined thermodynamic limit $N\rightarrow +\infty$ at the critical point $g=g_c$ separating two unique phases, and if $\lim_{N\rightarrow +\infty}\mathcal{L}_\mathrm{HEOM}(g, N)$ is continuous with respect to $g$, then~$\lim_{N \rightarrow +\infty}\lambda_1(g=g_c, N) = 0$.
\end{theorem}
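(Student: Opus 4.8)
The plan is to argue by contraposition: I will show that if the first excited eigenvalue does \emph{not} collapse onto the steady state, i.e.\ if $\lim_{N\to\infty}\lambda_1(g_c,N)\neq 0$, then the steady-state observable $\langle O\rangle_{ss}$ is continuous across $g_c$, which is incompatible with a first-order DPT. The starting observation is that, by properties (iii)--(iv), $\lambda_0=0$ belongs to the spectrum of $\mathcal{L}_\mathrm{HEOM}$ while every other eigenvalue has negative real part, so assuming $\lambda_1(g_c,N)\to\lambda_1^\infty\neq 0$ means that $0$ is an \emph{isolated} point of the spectrum of the limiting generator $\mathcal{L}_\infty(g_c):=\lim_{N\to\infty}\mathcal{L}_\mathrm{HEOM}(g_c,N)$: there is a radius $r>0$ with $|\lambda_i^\infty|\ge 2r$ for all $i\neq 0$.

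First I would propagate this isolation to a neighborhood of $g_c$. Because $\mathcal{L}_\infty(g)$ is continuous in $g$ by hypothesis and the eigenvalues depend continuously on the matrix entries, there exists $\delta>0$ such that for all $g\in(g_c-\delta,g_c+\delta)$ the disk $|z|<r$ contains the single simple eigenvalue $0$ and no other part of the spectrum. This lets me introduce the Riesz spectral projector
\begin{equation}
    P_0(g)=\frac{1}{2\pi i}\oint_{|z|=r}\bigl(z-\mathcal{L}_\infty(g)\bigr)^{-1}\,dz ,
\end{equation}
which is a rank-one projector onto the steady-state eigenspace. Since the resolvent $(z-\mathcal{L}_\infty(g))^{-1}$ is continuous in $g$ uniformly on the contour $|z|=r$ (the contour stays a fixed distance from the spectrum), $P_0(g)$ is continuous in $g$ at $g_c$.

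Next I would reconstruct the physical steady state from $P_0(g)$ and show it is continuous. Using trace preservation (property (ii)), the left eigenoperator associated with $\lambda_0=0$ is $\1^{(\vec0,\vec0)}$, so the normalized steady state can be written as
\begin{equation}
    \dket{\rho_{ss}(g)}=\frac{P_0(g)\dket{\sigma}}{\dbraket{\1^{(\vec0,\vec0)}|P_0(g)|\sigma}}
\end{equation}
for any fixed reference vector $\dket{\sigma}$ with nonvanishing overlap. Because $0$ is simple the denominator is continuous and nonzero near $g_c$, so $\dket{\rho_{ss}(g)}$, and hence $\langle O\rangle_{ss}(g)=\dbraket{O^{(\vec0,\vec0)}|\rho_{ss}(g)}=\Tr[O\,\mathcal{P}^{(\vec0,\vec0)}\rho_{ss}(g)]$, is continuous at $g_c$. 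A first-order DPT, however, separates two distinct phases whose order parameters differ, so $\langle O\rangle_{ss}$ must jump at $g_c$ (equivalently $\lim_{g\to g_c}|\partial_g\langle O\rangle_{ss}|=+\infty$ for $p=1$); this contradicts continuity and forces $\lim_{N\to\infty}\lambda_1(g_c,N)=0$.

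The main obstacle I anticipate is the careful handling of the order of limits: the DPT is defined \emph{after} the thermodynamic limit, whereas the gap is a property of each finite-$N$ generator (which always has a nondegenerate steady state). The continuity hypothesis on $\mathcal{L}_\infty$ is precisely what allows the spectral-projector argument to be carried out directly on the limiting operator, but one must still justify that $\lim_{N}\rho_{ss}(g,N)$ coincides with the steady state of $\mathcal{L}_\infty(g)$ constructed above. I would also need to be explicit that a ``first-order'' transition is taken to mean a genuine discontinuity of the order parameter (coexistence of two stationary states at $g_c$), since only a discontinuity --- not merely a divergent but jump-free derivative --- is ruled out by continuity of $P_0$; the accompanying requirement that $\mathrm{Im}[\lambda_1]\to 0$ as well then follows because isolation of $0$ is governed by $\min_{i\neq 0}|\lambda_i|$ in the complex plane, so a discontinuous steady state forces the whole complex eigenvalue, and not just its real part, to reach the origin.
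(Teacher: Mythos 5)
Your argument is correct in substance but takes a genuinely different route from the paper's. The paper proves the statement \emph{directly}: a first-order DPT means the left and right limits $\rho_\pm^{(\vec 0,\vec 0)}$ of the steady state at $g_c$ differ, each is annihilated by the corresponding one-sided limit of the generator, and continuity of $\lim_{N\to\infty}\mathcal{L}_\mathrm{HEOM}(g,N)$ transfers both into the null space of the generator \emph{at} $g_c$; a two-dimensional kernel then forces $\lambda_1(g_c)=0$. You instead argue by contraposition, assuming the gap stays open, enclosing the isolated simple eigenvalue $0$ with a Riesz contour, and deducing continuity of the spectral projector $P_0(g)$ and hence of $\rho_{ss}(g)$ and $\langle O\rangle_{ss}(g)$, contradicting the transition. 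The two arguments are logically dual, and your version buys something real: it makes precise \emph{why} an open gap forbids a discontinuity (uniform resolvent bounds on the contour), which the paper leaves implicit, and it is the natural template for quantitative statements (e.g.\ how fast the gap must close). The paper's version is more elementary and sidesteps the machinery by exhibiting the two null vectors explicitly. Two caveats on your side, both of which you partly anticipate: (1) the contrapositive of ``$\lambda_1\to 0$'' also includes the case $\mathrm{Re}[\lambda_1^\infty]=0$ with $\mathrm{Im}[\lambda_1^\infty]\neq 0$, in which $0$ need not be isolated from the rest of the spectrum (the ordering is by real part, not modulus), so your isolation radius $r$ is only guaranteed when $\mathrm{Re}[\lambda_1^\infty]\neq 0$; the direct proof does not have this branch to worry about, and the imaginary part is in any case deferred to Proposition~\ref{prop2} in the paper. (2) As you note, continuity of the generator only gives continuity --- not differentiability --- of $\langle O\rangle_{ss}$, so your contradiction requires reading ``first-order DPT'' as a genuine jump of the steady state; this is exactly the reading the paper adopts in its own proof (``the steady state must change discontinuously''), so you are on equal footing there.
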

\begin{proof}
Let us assume that a system undergoes a $1^{\mathrm{st}}$-order DPT in a well-defined thermodynamic limit $N\rightarrow +\infty$ when a parameter $g$ is varied. By definition, the steady state $\rho^{(\vec{0}, \vec{0})} (t \rightarrow \infty) \equiv \rho_{ss}^{(\vec{0}, \vec{0})} $ must change discontinuously, which implies that there exists a critical point $g_c$ such that
\begin{equation}
    \rho^{(\vec 0, \vec 0)}_- \neq  \rho^{(\vec 0, \vec 0)}_+,
    \label{Defrho+rho-}
\end{equation}
where $\rho^{(\vec 0, \vec 0)}_-$ and $\rho^{(\vec 0, \vec 0)}_{+}$ are the states (phases) of the system right before and after the transition point, i.e.,
\begin{equation}
 \begin{split}
    \rho^{(\vec 0, \vec 0)}_- &\equiv \lim_{g \rightarrow g^-_c} \lim_{N\rightarrow +\infty} \rho^{(\vec 0, \vec 0)}_{ss}(g, N), \\
     \rho^{(\vec 0, \vec 0)}_+ &\equiv \lim_{g \rightarrow g^+_c} \lim_{N\rightarrow +\infty} \rho_{ss}^{(\vec 0, \vec 0)}(g, N) 
 \end{split}
\end{equation}
which are unique by hypothesis. This implies that there exists $\rho_\pm$ such that 
\begin{equation}
\left(\lim_{g\rightarrow g_c^\pm}\lim_{N\rightarrow +\infty} \mathcal{L}_\mathrm{HEOM}(g, N)\right) [\rho_\pm] =0 ,
\end{equation}
with $\mathcal{P}^{(\vec0, \vec0)}\rho_\pm = \rho_\pm^{(\vec0, \vec0)}$. The continuity of the HEOM generator in the thermodynamic limit then gives 
\begin{equation}
   \left( \lim_{g\rightarrow g_c}\lim_{N\rightarrow +\infty} \mathcal{L}_\mathrm{HEOM}(g, N)\right) [\rho_\pm] =0. 
\end{equation}
Exactly at $g=g_c$, we then have two eigenoperators that belong to the null space of $\mathcal{L}_\mathrm{HEOM}$, while there is a unique steady state for $g \neq g_c$. Consequently, we must have 
\begin{equation}
    \lim_{N \rightarrow +\infty}\lambda_1(g = g_c, N) = 0 = \lambda_0, \quad \lim_{N \rightarrow +\infty}\lambda_1(g \neq g_c, N) \neq 0,
    \label{SM_lambda1geqgc}
\end{equation}
which concludes the proof.
\end{proof}
 Note that even if $\mathcal{L}_\mathrm{HEOM}(g)$ is continuous, there is no guarantee that $\rho_1(g)$ is continuous. Indeed, the coalescence of eigenvalues for instance may induce non-continuous eigenvectors~\cite{Kato}. Nevertheless, we will get rid of these difficulties by assuming that $\rho_1^{(\vec 0, \vec 0)} \equiv \mathcal{P}^{(\vec{0}, \vec{0})}\rho_1$ is continuous, as is done in the Markovian case~\cite{Minganti2018Spectral}. We can then elaborate on the form of the steady state at the critical point as a function of the right eigenvectors of the Liouvillian associated with $\lambda_0$ and $\lambda_1$, that is $\rho_0$ and $\rho_1$. In the following, unless otherwise stated we always assume the thermodynamic limit and thus remove the $N$ dependency.
\begin{theorem}\label{prop2}
Under the same assumptions as in Proposition \ref{prop1}, if $\rho_+$ and $\rho_-$ span the null space at $g=g_c$, then $\mathrm{Im}[\lambda_1] = 0$ holds in a finite neighborhood of $g=g_c$. Moreover, if $\lim_{N\rightarrow +\infty}\rho_1^{(\vec0, \vec0)}(g, N)$ is continuous with respect to $g$ and orthogonal to the steady state, then 
\begin{equation}
    \rho_{ss}^{(\vec0, \vec0)}(g=g_c) = \frac{\rho_+^{(\vec0, \vec0)} + \rho_-^{(\vec0, \vec0)}}{2},
    \label{rhossAtgc1stOrder}
\end{equation}
in the thermodynamic limit, \textit{i.e.,} the steady state is an equal mixture of the two phases $\rho^{\pm}$.
\end{theorem}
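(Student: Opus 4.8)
The plan is to prove the two assertions in turn: the reality of $\lambda_1$ near $g_c$ from the spectral symmetry, and then the equal-weight form of the critical steady state from a decomposition in the two-dimensional null space. For the first assertion I would argue by contradiction using property~(i). Proposition~\ref{prop1} gives $\lambda_1(g_c)=0$, and the hypothesis that $\rho_+,\rho_-$ span the null space means that exactly the two eigenvalues $\lambda_0,\lambda_1$ reach the origin at $g_c$, with $\lambda_2$ bounded away from $0$ in a neighborhood. If $\Im[\lambda_1]\neq 0$ at some $g$ arbitrarily close to $g_c$, then by symmetry of the spectrum about the real axis $\lambda_1^*$ is a distinct eigenvalue with $|\Re[\lambda_1^*]|=|\Re[\lambda_1]|$; as $g\to g_c$ both would tend to $0$, producing a third independent null vector and contradicting the two-dimensionality of the null space. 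Hence $\Im[\lambda_1]=0$ in a finite neighborhood of $g_c$.

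For the decomposition, I would first fix a basis of the null space at $g_c$. Since $\lambda_1$ is now real and $\mathcal{L}_\mathrm{HEOM}$ preserves Hermiticity (the origin of property~(i)), the eigenoperator $\rho_1$ may be taken Hermitian; property~(v), valid for $g\neq g_c$ where $\Re[\lambda_1]\neq 0$ and carried to $g_c$ by the assumed continuity of $\rho_1^{(\vec0,\vec0)}$, gives $\Tr[\rho_1^{(\vec0,\vec0)}]=0$. As both phases lie in the null space I would then expand their physical components as
\begin{equation}
  \rho_\pm^{(\vec0,\vec0)} = \rho_{ss}^{(\vec0,\vec0)} + b_\pm\,\rho_1^{(\vec0,\vec0)},\qquad b_\pm\in\R,
\end{equation}
where the coefficient of $\rho_{ss}^{(\vec0,\vec0)}$ is pinned to $1$ by the trace conditions $\Tr[\rho_\pm^{(\vec0,\vec0)}]=\Tr[\rho_{ss}^{(\vec0,\vec0)}]=1$ and $\Tr[\rho_1^{(\vec0,\vec0)}]=0$, and $b_\pm$ are real by Hermiticity. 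In this language, Eq.~\eqref{rhossAtgc1stOrder} is precisely the statement $b_++b_-=0$.

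The decisive step is therefore to establish $b_+=-b_-$, i.e.\ that the critical steady state carries no weight along $\rho_1^{(\vec0,\vec0)}$. The continuity of $\rho_1^{(\vec0,\vec0)}$ makes it the same operator on both sides of $g_c$, so the two phases differ only along it, $\rho_+^{(\vec0,\vec0)}-\rho_-^{(\vec0,\vec0)}=(b_+-b_-)\rho_1^{(\vec0,\vec0)}$; the orthogonality assumption $\Tr[\rho_{ss}^{(\vec0,\vec0)}\rho_1^{(\vec0,\vec0)}]=0$ then yields $\Tr[\rho_{ss}^{(\vec0,\vec0)}\rho_+^{(\vec0,\vec0)}]=\Tr[\rho_{ss}^{(\vec0,\vec0)}\rho_-^{(\vec0,\vec0)}]$, i.e.\ the critical steady state has equal Hilbert--Schmidt overlap with the two phases. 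I expect this to be the main obstacle to a fully rigorous argument: equal overlap alone only identifies $\rho_{ss}^{(\vec0,\vec0)}$ as the foot of the perpendicular dropped from the origin onto the coexistence segment $\{\rho_{ss}^{(\vec0,\vec0)}+b\,\rho_1^{(\vec0,\vec0)}\}$, which coincides with the midpoint $(\rho_+^{(\vec0,\vec0)}+\rho_-^{(\vec0,\vec0)})/2$ exactly when the two phases are equidistant from the origin, i.e.\ $\|\rho_+^{(\vec0,\vec0)}\|=\|\rho_-^{(\vec0,\vec0)}\|$.

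To close this gap I would invoke the reflection structure of the transition about $g_c$, exactly as in the Markovian treatment~\cite{Minganti2018Spectral}: the exchange $g-g_c\mapsto-(g-g_c)$ swaps the limiting phases $\rho_+\leftrightarrow\rho_-$ (equivalently $b\mapsto-b$, $\rho_1^{(\vec0,\vec0)}\mapsto-\rho_1^{(\vec0,\vec0)}$) while fixing the uniquely defined critical steady state obtained as the $N\to\infty$ limit at $g=g_c$. Invariance of $\rho_{ss}^{(\vec0,\vec0)}$ under $b\mapsto-b$ forces $b_+=-b_-$, equivalently $\|\rho_+^{(\vec0,\vec0)}\|=\|\rho_-^{(\vec0,\vec0)}\|$, and hence Eq.~\eqref{rhossAtgc1stOrder}. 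The hard part is thus not the algebra but justifying this reflection symmetry—tracing it, as in the Markovian case, to the symmetric way in which the two coexisting phases emerge from the single finite-$N$ steady state as $g$ is swept through $g_c$.
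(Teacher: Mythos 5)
Your first claim and its proof coincide with the paper's: a nonreal $\lambda_1$ arbitrarily close to $g_c$ would, by spectral property (i), drag $\lambda_1^*$ to zero together with $\lambda_1$ and $\lambda_0$, producing a three-dimensional null space at $g=g_c$ and contradicting the hypothesis that $\rho_\pm$ span it. For the second claim your route diverges, and this is where the gap lies. The paper's argument runs as follows: property (v) together with the assumed continuity of $\rho_1^{(\vec0,\vec0)}$ gives $\mathrm{Tr}[\rho_1^{(\vec0,\vec0)}(g_c)]=0$; since the null space is spanned by $\rho_\pm$, one has $\rho_1^{(\vec0,\vec0)}(g_c)\propto\rho_+^{(\vec0,\vec0)}-\rho_-^{(\vec0,\vec0)}$; and---this is the ingredient you are missing---the two phases are then \emph{constructed} from the spectral decomposition of the Hermitian, traceless operator $\rho_1^{(\vec0,\vec0)}(g_c)$, by collecting its positive part into $\rho_+^{(\vec0,\vec0)}$ and its negative part into $-\rho_-^{(\vec0,\vec0)}$ and normalizing each to unit trace. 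Because the two parts are supported on orthogonal eigenspaces, $\dbraket{\rho_+^{(\vec0,\vec0)}|\rho_-^{(\vec0,\vec0)}}=0$ comes for free, with no appeal to any symmetry of the model. Your substitute for this---a reflection $g-g_c\mapsto-(g-g_c)$ exchanging the two phases while fixing the critical steady state---is not among the hypotheses of the proposition, is not available for a generic first-order transition (nothing forces the two coexisting phases to be images of one another under any map), and you explicitly leave it unjustified; as written, the proposal therefore does not close.

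That said, your geometric observation is sharp and in fact applies to the paper's own final step: writing $\rho_{ss}(g_c)=c\,\rho_+^{(\vec0,\vec0)}+(1-c)\,\rho_-^{(\vec0,\vec0)}$ and imposing $\dbraket{\rho_1^{(\vec0,\vec0)}|\rho_{ss}}=0$ with $\dbraket{\rho_+^{(\vec0,\vec0)}|\rho_-^{(\vec0,\vec0)}}=0$ yields $c\,\norm{\rho_+^{(\vec0,\vec0)}}^2=(1-c)\,\norm{\rho_-^{(\vec0,\vec0)}}^2$, which fixes $c=1/2$ only when the two phases have equal Hilbert--Schmidt norm---precisely the ``foot of the perpendicular versus midpoint'' issue you raise. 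The paper does not address this point explicitly, but it does flag immediately after the proof that the decomposition rests on strong assumptions that may fail. The constructive fix is to adopt the positive/negative-part construction of $\rho_\pm^{(\vec0,\vec0)}$ to obtain their orthogonality, and to treat the equal-norm condition as part of those assumptions rather than trying to derive it from an unstated reflection symmetry.
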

\begin{proof}
We must have $\mathrm{Im}[\lambda_1] = 0$ in a finite neighbourhood of $g_c$ as, if it were not the case, then we would have three zero eigenvalues at $g=g_c$ :  $\lambda_1(g=g_c) = \lambda_1^*(g=g_c) = \lambda_0 = 0$ due to the spectral property (i). This would imply that $\rho^+$ and $\rho^-$ do not span the null space at $g=g_c$~\footnote{We recall that we always assume the diagonalizability of the HEOM Liouvillian.}, which contradicts our assumptions. 

To prove Eq.~\eqref{rhossAtgc1stOrder}, we first note that from the property (v) of $\mathcal{L}_{\mathrm{HEOM}}$, we have $\mathrm{Tr}[\mathds{1}^{(\vec 0, \vec 0)}\rho_1 (g \neq g_c)] = 0$ since $\mathrm{Re}[\lambda_1(g\neq g_c)] \neq 0$. By hypothesis, $\rho_1^{(\vec 0, \vec 0)} \equiv \mathcal{P}^{(\vec{0}, \vec{0})}\rho_1$ is continuous. This means that $\rho_1^{(\vec 0, \vec 0)}(g)$ does not change abruptly as a function $g$ but instead evolves continuously and makes a small excursion into the null space of $\mathcal{L}_\mathrm{HEOM}$ exactly at $g=g_c$. Therefore, we extend the zero trace property of $\rho_1^{(\vec 0, \vec 0)}$ at the critical point. Since the system null space at $g=g_c$ is spanned by $\rho^{(\vec 0, \vec 0)}_{\pm}$, we must have 
\begin{equation}
    \rho_1^{(\vec 0, \vec 0)}(g=g_c) \propto \rho^{(\vec 0, \vec 0)}_+ - \rho^{(\vec 0, \vec 0)}_-.
    \label{rho1_pm}
\end{equation}
Furthermore, since $\rho_\pm^{(\vec0, \vec0)}$ span the null space at $g=g_c$, the steady state at the critical point must read 
\begin{equation}
    \rho_{ss}(g=g_c) = c\rho_+^{(\vec0, \vec0)} + (1-c) \rho_-^{(\vec0, \vec0)},
\end{equation}
with $c \in [0, 1]$. The precise value of $c$ cannot be determined through the behavior of $\rho_{ss}(g\neq g_c)$. If, however, we impose orthogonality between $\rho_{ss}(g=g_c)$ and  $\rho_1^{(\vec0, \vec0)} (g=g_c)$, we obtain 
\begin{equation}
    \rho_{ss}^{(\vec0, \vec0)}(g=g_c) = \frac{\rho_+^{(\vec0, \vec0)} + \rho_-^{(\vec0, \vec0)}}{2},
\end{equation}
in virtue of the orthogonality between $\rho_\pm^{(\vec0, \vec0)}$, that we prove below.

For $\lambda_1(g=g_c) = 0 \in \mathbb{R}$, $\rho_1^{(\vec0, \vec0)}$ can be assumed Hermitian without loss of generality, as a direct consequence of property (i) of $\mathcal{L}_{\mathrm{HEOM}}$. Consequently, $\rho_1^{(\vec0, \vec0)}$ can be diagonalized and one can construct $\rho_{\pm}$, satisfying Eq.~\eqref{rho1_pm}, simply by gathering all positive eigenvalues in $\rho^{(\vec 0, \vec 0)}_{+}$ and all negative eigenvalues in $-\rho^{(\vec 0, \vec 0)}_{-}$ and then normalizing the trace of $\rho^{(\vec 0, \vec 0)}_{\pm}$ to one. By construction, $\rho^{(\vec 0, \vec 0)}_{\pm}$ are orthogonal.
\end{proof}
We stress that the decomposition~\eqref{rhossAtgc1stOrder} is based on strong assumptions, \textit{e.g.,} the orthogonality of $\rho_1^{(\vec0, \vec0)}$ and $\rho_{ss}$ at the critical point, which may not be met. Indeed, in Ref.~\cite{Huber2020Nonequilibrium}, it has been shown that, in the Markovian regime, there exists $1^{\mathrm{st}}$-order DPTs without phase coexistence at the critical point, which contradicts Eq.~\eqref{rhossAtgc1stOrder}. Nevertheless, as shown in the next section, we find an excellent agreement with the present theory and the two models explored in the companion paper~{\cite{Debecker2024PRL}. We close this section dedicated to first order DPTs by proving that the closing of the gap at the critical point implies the existence of a first order DPT.

\begin{theorem}\label{prop3}
If $\mathcal{\lambda}_1(g)$ vanishes only at $g=g_c$ with $\mathrm{Im}[\lambda_1(g)] = 0$ in a finite domain around $g=g_c$, $\rho_1^{(\vec0, \vec0)} (g)$ is continuous and if the null space is spanned at $g=g_c$ by two linearly independent eigenoperators, then there is a $1^{\mathrm{st}}$-order DPT occurring at $g=g_c$.
\end{theorem}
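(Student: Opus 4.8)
The plan is to show that the stated spectral conditions force the thermodynamic-limit steady state $\rho_{ss}^{(\vec0,\vec0)}(g)$ to be discontinuous at $g=g_c$: once a jump $\rho_{ss}^{(\vec0,\vec0)}(g_c^-)\neq\rho_{ss}^{(\vec0,\vec0)}(g_c^+)$ is established, any observable $O$ that separates the two limiting operators satisfies $\lim_{g\to g_c}|\partial_g\langle O\rangle_{ss}|=+\infty$, which is precisely a $1^{\mathrm{st}}$-order DPT in the sense of Eq.~\eqref{defDPT} with $p=1$. The whole argument is thus the converse of Propositions~\ref{prop1}--\ref{prop2} run backwards, and I would reuse their constructions wherever possible.

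First I would fix the regime. Since $\lambda_1(g)$ vanishes only at $g_c$, property~(iv) gives $\mathrm{Re}[\lambda_1(g)]<0$ for all $g\neq g_c$ in the finite domain, so $0$ is a simple eigenvalue there and the steady state $\rho_{ss}(g)=\rho_0(g)$ is unique. At $g=g_c$ the null space is two-dimensional by hypothesis; invoking the continuity of $\rho_1^{(\vec0,\vec0)}(g)$ together with properties~(i) and~(v), the limit $\rho_1^{(\vec0,\vec0)}(g_c)$ is a traceless Hermitian operator in the physical sector, so the construction already used in the proof of Proposition~\ref{prop2} yields two orthogonal physical density operators $\rho_+^{(\vec0,\vec0)}$ and $\rho_-^{(\vec0,\vec0)}$ that span the image under $\mathcal{P}^{(\vec0,\vec0)}$ of the null space, with $\rho_1^{(\vec0,\vec0)}(g_c)\propto\rho_+^{(\vec0,\vec0)}-\rho_-^{(\vec0,\vec0)}$ as in Eq.~\eqref{rho1_pm}. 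Because the set of density operators is compact and the generator is continuous in $g$ (which I take to be the case, as in Propositions~\ref{prop1}--\ref{prop2}), the one-sided limits $\rho_{ss}^{(\vec0,\vec0)}(g_c^\pm)$ exist along the steady-state branch and lie in the null space, hence are convex combinations $c_\pm\,\rho_+^{(\vec0,\vec0)}+(1-c_\pm)\,\rho_-^{(\vec0,\vec0)}$ for some $c_\pm\in[0,1]$.

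The remaining and decisive step is to prove $c_-\neq c_+$, i.e.\ that the \emph{unique} off-critical steady state approaches genuinely different extreme combinations from the two sides; this is where I expect the main obstacle to lie. My approach would be degenerate perturbation theory in the spirit of Kato~\cite{Kato} in the two-dimensional slow subspace spanned by $\rho_0$ and $\rho_1$ as $\lambda_0$ and $\lambda_1$ coalesce at $g_c$. Within that subspace the effective $2\times2$ generator has eigenvalues $0$ (pinned by trace preservation, property~(ii), with left null vector $\mathds{1}^{(\vec0,\vec0)}$) and $\lambda_1(g)$; the hypothesis $\mathrm{Im}[\lambda_1]=0$ in a finite neighbourhood keeps both eigenvalues on the real axis and rules out a spiralling coalescence, so the one-sided steady eigenvectors are well defined. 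I would then argue that the decaying direction, which converges to $\rho_+^{(\vec0,\vec0)}-\rho_-^{(\vec0,\vec0)}$, relaxes the flow toward opposite extreme points of the null-space segment on the two sides of $g_c$, so that $\rho_{ss}^{(\vec0,\vec0)}(g_c^-)=\rho_-^{(\vec0,\vec0)}$ and $\rho_{ss}^{(\vec0,\vec0)}(g_c^+)=\rho_+^{(\vec0,\vec0)}$ up to relabelling. The delicate point is that at a non-Hermitian degeneracy the rank-one spectral projector onto the $\lambda_0$ eigenspace need not have coinciding one-sided limits, and controlling this eigenvector swap rigorously, rather than merely asserting it as in the physics-level treatment of Ref.~\cite{Minganti2018Spectral}, is exactly what the continuity and reality hypotheses are there to tame. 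Granting this, $\rho_+^{(\vec0,\vec0)}\neq\rho_-^{(\vec0,\vec0)}$ delivers the sought discontinuity and hence the $1^{\mathrm{st}}$-order DPT.
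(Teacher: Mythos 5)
Your setup is sound and matches the paper's ingredients (simplicity of the zero eigenvalue away from $g_c$, Hermiticity and tracelessness of $\rho_1^{(\vec0,\vec0)}(g_c)$, the orthogonal decomposition $\rho_1^{(\vec0,\vec0)}(g_c)\propto\rho_+^{(\vec0,\vec0)}-\rho_-^{(\vec0,\vec0)}$, and the one-sided limits $c_\pm\,\rho_+^{(\vec0,\vec0)}+(1-c_\pm)\,\rho_-^{(\vec0,\vec0)}$). But the decisive step --- proving $c_-\neq c_+$ --- is exactly where your argument stops being a proof. You assert that the decaying direction ``relaxes the flow toward opposite extreme points'' on the two sides of $g_c$ and then write ``Granting this\ldots''; that granted statement \emph{is} the proposition. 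Nothing in the hypotheses as you have deployed them excludes the possibility that the steady-state branch approaches the \emph{same} convex combination from both sides (say $c_-=c_+=1/2$), in which case there is no discontinuity and no DPT. The Kato-type perturbative picture of an eigenvector swap at the coalescence is a plausible mechanism, but you neither derive it from the stated assumptions nor explain why the reality of $\lambda_1$ and the continuity of $\rho_1^{(\vec0,\vec0)}$ force the swap rather than a symmetric (non-swapping) approach. Your stronger claim that $c_\pm\in\{0,1\}$ exactly is also unsupported and unnecessary.

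The paper circumvents precisely this difficulty by arguing by \emph{contradiction} rather than directly. Assuming no DPT, $\rho_{ss}^{(\vec0,\vec0)}(g)$ is continuous at $g_c$. For $g\neq g_c$ the gap is open, so \emph{both} of the mutually orthogonal density operators $\rho_{1+}^{(\vec0,\vec0)}(g)$ and $\rho_{1-}^{(\vec0,\vec0)}(g)$ relax under $e^{\mathcal{L}_\mathrm{HEOM}(g)t}$ to the same unique steady state; combining this with the assumed continuity of $\rho_{ss}^{(\vec0,\vec0)}$ and with the fact that $\rho_{1\pm}^{(\vec0,\vec0)}(g_c)$ lie in the (two-dimensional) null space at $g_c$ forces $\rho_{1+}^{(\vec0,\vec0)}(g_c)=\rho_{1-}^{(\vec0,\vec0)}(g_c)$, contradicting $\mathrm{Tr}[\rho^{(\vec0,\vec0)\dagger}_{1+}\rho^{(\vec0,\vec0)}_{1-}]=0$. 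This route never needs to identify \emph{which} extreme point the steady state selects on each side, only that the continuity assumption is incompatible with the orthogonality built into the spectral decomposition of $\rho_1^{(\vec0,\vec0)}$. To repair your proof you would either have to supply the missing eigenvector-swap argument rigorously or restructure it as the paper does.
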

\begin{proof}
    We proceed by contradiction by assuming that $\lim_{g\rightarrow g_c} \lambda_1(g) = 0$ and that there is no $1^{\mathrm{st}}$-order DPT. Equivalently, we have that for any observable $O$ of the system $\mathrm{Tr}[O \rho_\mathrm{ss}^{(\vec 0, \vec 0)}](g)$ is continuous at $g=g_c$, which implies that $\rho_{ss}^{(\vec 0, \vec 0)}(g)$ is also continuous at $g=g_c$. However, at $g=g_c$, $\lambda_1 =0$ and as in Proposition \ref{prop2}, we may then extend the zero-trace condition by setting
\begin{equation}
    \rho_{1}^{(\vec 0, \vec 0)} (g=g_c) \propto \rho^{(\vec 0, \vec 0)}_{1+} - \rho^{(\vec 0, \vec 0)}_{1-},
\end{equation}
where $\rho^{(\vec 0, \vec 0)}_{1\pm}$ can be found by diagonalizing $\rho_1^{(\vec0, \vec0)}$ [which is always possible since $\rho_1^{(\vec0, \vec0)}$ is hermitian in the domain in which $\mathrm{Im}[\lambda_1] = 0$] and gathering again all positive eigenvalues in $\rho^{(\vec 0, \vec 0)}_{1+}$ and all negative eigenvalues in $-\rho^{(\vec 0, \vec 0)}_{1-}$ and then normalizing the trace of $\rho^{(\vec 0, \vec 0)}_{1\pm}$ to one. By construction,  $\rho^{(\vec 0, \vec 0)}_{1\pm}$ are density matrices such that $\dbraket{\rho^{(\vec 0, \vec 0)}_{1+}|\rho^{(\vec 0, \vec 0)}_{1-}} \equiv \mathrm{Tr}[\rho^{(\vec 0, \vec 0)\dagger}_{1+} \rho^{(\vec 0, \vec 0)}_{1-}] = 0$. At the critical point, the steady state can then be written as 
\begin{equation}
  \rho_{ss}^{(\vec0, \vec0)} (g=g_c) = c \rho^{(\vec 0, \vec 0)}_{1+} + (1-c) \rho^{(\vec 0, \vec 0)}_{1-},
\end{equation}
where $c \in [0, 1]$. Now, for all $g \neq g_c$ and $g\in [g_c-\epsilon, g_c + \epsilon]$ ($\epsilon > 0$), the gap is not closed. Therefore, 
\begin{equation}
 \begin{split}
   \mathcal{P}^{(\vec0, \vec0)}\lim_{g\rightarrow g_c^\pm}\lim_{t\rightarrow +\infty} e^{\mathcal{L}_\mathrm{HEOM} (g) t} \rho_{1\pm}(g) &= \rho_0^{(\vec0, \vec0)}(g_c^\pm) \\ 
   &= \rho^{(\vec0, \vec0)}_{ss}(g_c^\pm),
   \end{split}
\end{equation}
By hypothesis, however, $\rho_{ss}^{(\vec0, \vec0)}$ is continuous at $g=g_c$, which means 
\begin{equation}
 \begin{split}
     \mathcal{P}^{(\vec0, \vec0)}\lim_{g\rightarrow g_c^\pm}\lim_{t\rightarrow +\infty} e^{\mathcal{L}_\mathrm{HEOM} (g) t} \rho_{1\pm}(g) &= \rho_{ss}^{(\vec0, \vec0)} (g=g_c) \\
     &= c \rho^{(\vec 0, \vec 0)}_{1+} + (1-c) \rho^{(\vec 0, \vec 0)}_{1-},
 \end{split}
\end{equation}
or
\begin{equation}
    \rho^{(\vec 0, \vec 0)}_{1\pm} (g=g_c) = c \rho^{(\vec 0, \vec 0)}_{1+}(g=g_c) + (1-c) \rho^{(\vec 0, \vec 0)}_{1-}(g=g_c),
    \label{rho_1^pm}
\end{equation}
since $\rho_{1\pm}(g)$ belongs to the null space at $g=g_c$. From Eq.~\eqref{rho_1^pm} we infer $\rho^{(\vec 0, \vec 0)}_{1+}(g=g_c) = \rho^{(\vec 0, \vec 0)}_{1-}(g=g_c)$, hence the contradiction with $\dbraket{\rho^{(\vec 0, \vec 0)}_{1+}|\rho^{(\vec 0, \vec 0)}_{1-}} \equiv \mathrm{Tr}[\rho^{(\vec 0, \vec 0)\dagger}_{1+} \rho^{(\vec 0, \vec 0)}_{1-}] = 0$ or even with the very existence of $\rho_1^{(\vec0, \vec0)}$.
\end{proof}

\subsubsection{$2^{\mathrm{nd}}$-order DPTs with SSB}

Let us now discuss the consequences of the properties of the HEOM Liouvillian on $2^{\mathrm{nd}}$-order DPTs associated with SSB. For clarity of exposition, we only consider DPTs associated with a $\mathbb{Z}_2$-SSB, but the generalization to more general symmetries is straightforward.

Let $\mathcal{L}_\mathrm{HEOM} (g)$ be the HEOM generator that captures a $2^{\mathrm{nd}}$-order DPT associated with the spontaneous breaking of a $\mathbb{Z}_2$ symmetry for $g \geq g_c$, $g_c$ being the critical point. By definition of a weak symmetry, there exists a superoperator $\mathcal{U}_2$ such that 
\begin{equation}
    [\mathcal{L}_\mathrm{HEOM}, \mathcal{U}_2] = 0.
\end{equation}
As discussed in Section~\ref{sec:symSSB}, the very existence of the operator $\mathcal{U}_2$ constrains the HEOM generator to adopt a block-diagonal structure when written in the eigenbasis of $\mathcal{U}_2$, namely \begin{equation}
    \mathcal{L}_\mathrm{HEOM} = \bigoplus_{u_k = \pm 1} \mathcal{L}_{u_k},
\end{equation}
where $\mathcal{L}_{u_k = \pm1}$ are the two blocks associated with the two eigenvalues of $\mathcal{U}_2$, namely $+1$ ($k = 0$) and $-1$ ($k = 1$). By hypothesis, a $2^{\mathrm{nd}}$-order DPT with $\mathbb{Z}_2$-SSB occurs for $g \geq g_c$ and $N \rightarrow + \infty$, i.e., the two blocks $\mathcal{L}_{+1}$ and $\mathcal{L}_{-1}$ get coupled in the thermodynamic limit: $\lambda_0^{(k=0)}(g \geq g_c)= 0 = \lambda_0^{(k=1)}$. The associated eigenvectors, denoted by $\rho_0^{(k)}$, are then orthogonal since we have
\begin{equation}
 \begin{split}
    \dbraket{\rho_0^{(0)} | \rho_0^{(1)}} &= \dbraket{ \mathcal{U}_2\rho_0^{(0)} | \rho_0^{(1)}} \\
    &= \dbraket{\rho_0^{(0)} | \mathcal{U}_2 \rho_0^{(1)}} \\
    &= - \dbraket{\rho_0^{(0)} | \rho_0^{(1)}}
 \end{split}
\end{equation}
as $\mathcal{U}_2$ is Hermitian, and thus $\dbraket{\rho_0^{(0)} | \rho_0^{(1)}} = 0$. Now, if we define  
\begin{equation}
   \rho_\pm \propto \rho_0^{(0)} \pm \rho_0^{(1)},
   \label{rhopm_SSB}
\end{equation}
it is clear that $\rho_\pm$ belong to the kernel of $\mathcal{L}_\mathrm{HEOM}(g\geq g_c)$ in the thermodynamic limit. Note, however, that $\rho_\pm$ are not eigenvectors of $\mathcal{U}_2$ as $\mathcal{U}_2 \rho_\pm \propto \rho_\mp$. Moreover, the projections $\rho^{(\vec0, \vec0)}_{\pm} = \mathcal{P}^{(\vec0, \vec0)} \rho_\pm$ allow us to interpret $\rho^{(\vec0, \vec0)}_{\pm}$ as steady states of the system that explicitly break the symmetry. Equation~(\ref{rhopm_SSB}) can be inverted, so that 
\begin{equation}
 \begin{split}
    \rho_0^{(0)} &\propto \rho_+ + \rho_-, \\
    \rho_0^{(1)} &\propto \rho_+ - \rho_-.
 \end{split}
\end{equation}
In particular, if we apply $\mathcal{P}^{(\vec0, \vec0)}$ to both sides of the previous relations, we obtain 
\begin{equation}
 \begin{split}
     \rho_0^{(0)(\vec0, \vec0)} &\propto \rho^{(\vec0, \vec0)}_+ + \rho^{(\vec0, \vec0)}_-, \\
    \rho_0^{(1)(\vec0, \vec0)} &\propto \rho^{(\vec0, \vec0)}_+ - \rho^{(\vec0, \vec0)}_-,
 \end{split}
\end{equation}
with $\mathcal{P}^{(\vec0, \vec0)}\rho_0^{(k)} = \rho_0^{(k)(\vec0, \vec0)}$ ($k = 0, 1$), which is exactly what the Markovian theory predicts~\cite{Minganti2018Spectral}. For finite $N$, the steady state is unique, \textit{i.e.} $\rho_{ss} \propto \rho_0^{(0)(\vec0, \vec0)}$ and therefore
\begin{equation}
    \rho_{ss} (g \geq g_c, N) \approx \frac{\rho_+^{(\vec0, \vec0)} (g \geq g_c, N) + \rho_-^{(\vec0, \vec0)}(g \geq g_c, N)}{2}.
    \label{rhoApprox2ndOrder}
\end{equation}
%In the next section, we check numerically the validity of this decomposition for the second order DPT studied in the companion paper~\textcolor{red}{[PRL]}. 

To conclude this section, a summary of the behavior of the key eigenvalues of the HEOM Liouvillian for the different cases is depicted in Fig.~\ref{SM_FigSpectrum}.

\begin{figure*}
    \centering
    \includegraphics[width=0.9\textwidth]{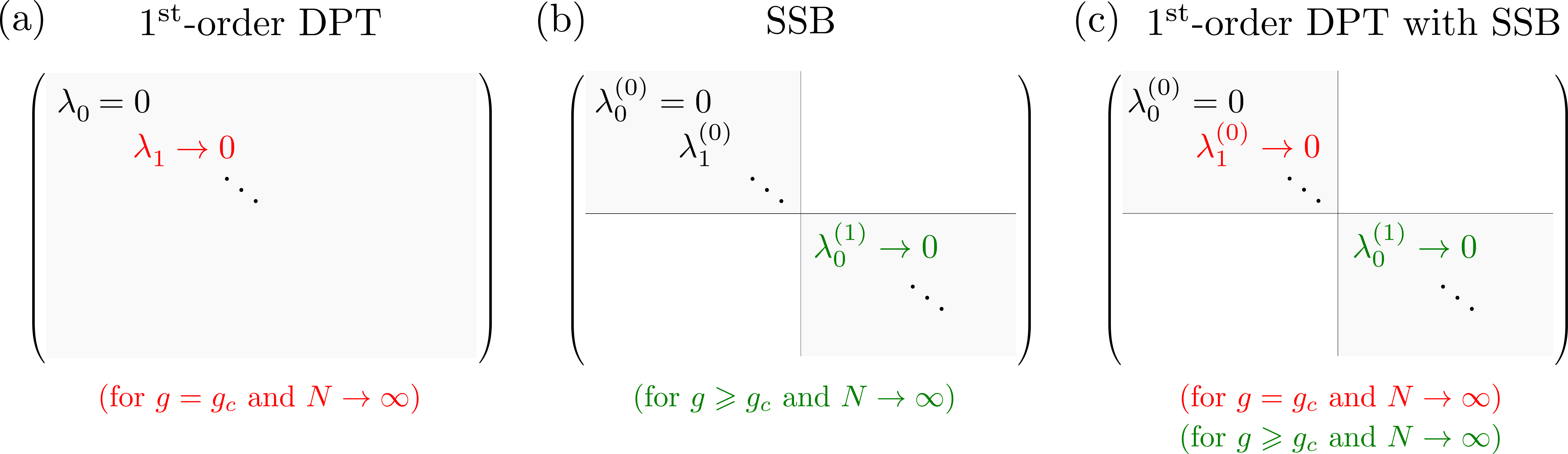}
    \caption{Main properties of the behavior of key eigenvalues of the HEOM Liouvillian in the thermodynamic limit $N \to \infty$ in different scenarii. (a) When there is a $1^{\mathrm{st}}$-order DPT, the first non-zero eigenvalue $\lambda_1$ should vanish at the critical point. (b) When the Liouvillian can be decomposed in symmetry sectors (illustrated here for the case of two symmetry sectors $k = 0$ and $k = 1$) and there is a SSB, the first non-zero eigenvalue of the Liouvillian in each symmetry sector different from the one of the steady state (i.e., $\lambda_0^{(k)}$ for $k > 0$) should vanish in the symmetry broken phase ($g \geqslant g_c$). (c) When there is both a $1^{\mathrm{st}}$-order DPT and a SSB, the system exhibits the combined behavior of cases (a) and (b), where the eigenvalue vanishing only at $g_c$ is the one in the symmetry sector of the steady state (i.e., $\lambda_1^{(0)}$).}
    \label{SM_FigSpectrum}
\end{figure*}

\section{Understanding the emergence of DPTs with spectral decompositions}

In this section, we illustrate how DPTs can be understood from spectral decomposition in the cases of the two models considered in the companion Letter~\cite{Debecker2024PRL}. 

\subsection{$1^{\mathrm{st}}$-order DPTs}
We have proven that the emergence of a $1^{\mathrm{st}}$-order DPT can be traced back to the existence of an eigenvalue $\lambda_1^{(0)}$ which vanishes at the critical point in the thermodynamic limit $N \rightarrow + \infty$ (see Proposition \ref{prop1}). Moreover, the associated eigenvector, namely $ \rho_{1}$ whose projection on the system sector $(\vec0,  \vec0)$ is written as $\rho_1^{(\vec0,  \vec0)}$, contains information about the system states right after/before the critical point, respectively denoted by $\rho_+^{(\vec0,  \vec0)}$ and $\rho_{-}^{(\vec0,  \vec0)}$. We also know that the steady state at the critical point can simply be written as a mixture with equal weights of the two phases $\rho_{\pm}^{(\vec0,  \vec0)}$ (see Propositions \ref{prop2} and \ref{prop3}). Those statements are strictly true in the thermodynamic limit $N \rightarrow + \infty$. We expect, however, that at finite $N$ the approximation 
\begin{equation}
    \rho_{ss}^{(\vec0,  \vec0)}(g=g_c, N) \approx \frac{\rho_+^{(\vec0,  \vec0)} (g_c, N) + \rho_-^{(\vec0,  \vec0)} (g_c, N)}{2},
    \label{rhoapprox}
\end{equation}
holds in a finite region around $g_c$ and improves in accuracy as $N$ increases. As in Sec.~\ref{sec::Spectral theory}, $\rho_{\pm}^{(\vec0,  \vec0)}(g_c, N)$ can be determined by spectral decomposition $\rho_1^{(\vec0,  \vec0)} (g_c, N) \propto \rho_+^{(\vec0,  \vec0)} (g_c, N) - \rho_-^{(\vec0,  \vec0)} (g_c, N)$. Furthermore, as we approach the thermodynamic limit, the states $\rho_{\pm}^{(\vec0,  \vec0)}(g_c, N)$ are expected to approach the states immediately before or after the critical point.

Here, we checked the statement above in the context of a dissipative Lipkin-Meshkov-Glick model that corresponds to Eq.~(\ref{Htot}) with $N_E = M_1 = 1$, with the system Hamiltonian
\begin{equation}
    H_S = \frac{V}{2N}\left(S_x^2 - S_y^2\right) = \frac{V}{2N}\left(S_+^2 + S_-^2\right),
    \label{H_LMG}
\end{equation}
and with the single jump operator $L = S_-$ and bath correlation function
\begin{equation}
    \alpha(\tau) = \frac{\gamma \kappa}{2N} e^{-\kappa |\tau| -i\omega \tau},
    \label{alpha_LMG}
\end{equation} 
where $S_\alpha = \sum_{j=1}^N \sigma_\alpha^{(j)}/2$ ($\alpha = x, y, z$) are collective spin operators defined in terms of single-spin Pauli operators $\sigma_\alpha^{(j)}$ and $S_\pm = S_x \pm i S_y$. The non-Markovian dynamics of the collective spin is thus governed by Eq.~(\ref{HEOM}) with $w \equiv w_{11} = \kappa + i \omega$ and $G \equiv G_{11} = \gamma \kappa/(2N)$. 

The model above exhibits a $1^{\mathrm{st}}$-order DPT as the parameter 
\begin{equation}
    g  \equiv \frac{V}{\gamma}
\end{equation} is varied. In the Markovian limit $\kappa \to \infty$ where $\alpha(\tau) \to (\gamma/N)\delta(\tau)$, i.e., when the bath has no memory, the critical point is $g_c^{M} = 1/2$~\cite{Lee2014D}, separating a steady state phase where 
$\langle S_z\rangle/(N/2) \to -1$ ($g < g_c^{M}$) to a phase where $\langle S_z\rangle/(N/2) \to 0$ ($g > g_c^{M}$) for $N \to \infty$. In the companion Letter, we show that deviations from the Markovian limit $\kappa \to \infty$ make it possible to shift toward lower values the critical point $g_c$.

In Fig.~\ref{SzSSmpasN}, we show the true quantum solution obtained through diagonalization of the HEOM generator and compare it with the states $\rho_{\pm}^{(\vec0,  \vec0)}$ and $(\rho_+^{(\vec0,  \vec0)} + \rho_-^{(\vec0,  \vec0)})/2$ as a function of $g$ and for two different values of $N$. As expected, the states $\rho_-^{(\vec0,  \vec0)}$ and $\rho_+^{(\vec0,  \vec0)}$ capture the right magnetization before and after the critical point, as shown in panels~(a) and~(c). More precisely, we see that $\rho_-^{(\vec0,  \vec0)}$ leads to $\langle S_z\rangle/(N/2) \approx -1$ (green curve) for $g < g_c$ and $\rho_+^{(\vec0,  \vec0)}$ leads to $\langle S_z\rangle/(N/2) \approx 0$ (blue curve) for $g > g_c$.

We also investigate the fidelity between $\rho_{ss}^{(\vec0,  \vec0)}$ and the states $\rho_{\pm}^{(\vec0,  \vec0)}$ and $(\rho_+^{(\vec0,  \vec0)} + \rho_-^{(\vec0,  \vec0)})/2$. For this, we use the definition 
\begin{equation}
    F(\rho, \sigma) = \mathrm{Tr}\sqrt{\sqrt{\rho}\sigma\sqrt{\rho}}
    \label{Fidelity_defintion}
\end{equation}
for the fidelity between two density operators $\rho$ and $\sigma$. We find that $\rho_-^{(\vec0,  \vec0)}$ ($\rho_+^{(\vec0,  \vec0)}$) is a good approximation to the true steady state $\rho_{ss}^{(\vec0,  \vec0)}$ in a region below (above) the critical point. Moreover, in a small region around the critical point, the steady state is best described by $(\rho_+^{(\vec0,  \vec0)} + \rho_-^{(\vec0,  \vec0)})/2 $. All these results are consistent with what we would expect for finite $N$; as $N$ increases, we find that the region in which $(\rho_+^{(\vec0,  \vec0)} + \rho_-^{(\vec0,  \vec0)})/2  \approx \rho_{ss}^{(\vec0,  \vec0)}$ is getting narrower [compare panels (b) and (d)], which is consistent with the fact that Eq.~\eqref{rhoapprox} strictly holds at a unique point in the thermodynamic limit. Finally, we note that the location of the critical point gets shifted as $N$ increases while the fidelities get higher.
\begin{figure}
    \centering
\includegraphics[width = 0.475\textwidth]{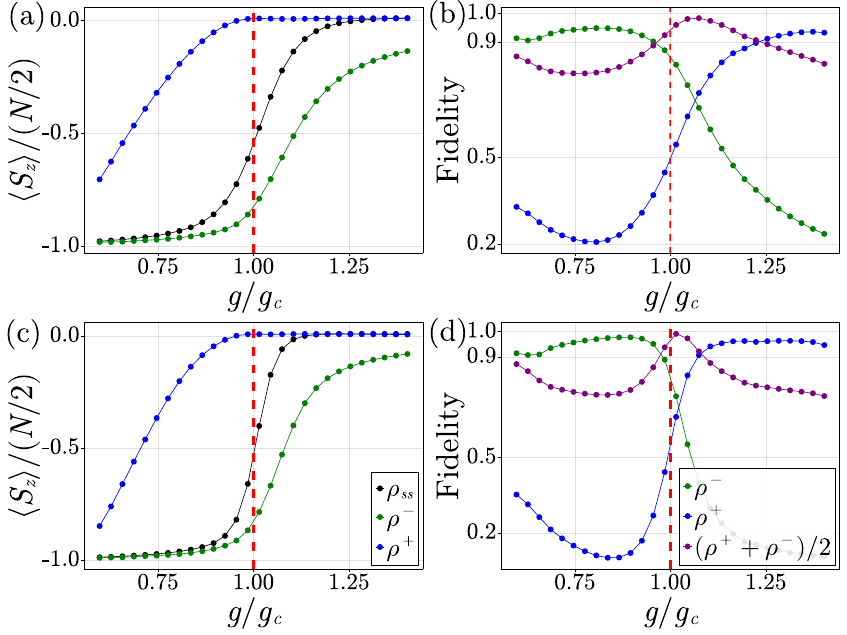}
    \caption{
Right panels: Fidelity between the steady states $\rho_{ss}^{(\vec0,  \vec0)}$ and the states $\rho_\pm^{(\vec0,  \vec0)}$ (green and blue curves) and $(\rho_+^{(\vec0,  \vec0)} + \rho_-^{(\vec0,  \vec0)})/2$ (purple curve) reconstructed from the spectral decomposition of $\rho_1^{(\vec0,  \vec0)}$. Left panels: Magnetization $\braket{S_z}/(N/2)$ as a function of $g$ for the steady state $\rho_{ss}^{(\vec0,  \vec0)}$ (black), $\rho_-^{(\vec0,  \vec0)}$ (green), $\rho_+^{(\vec0,  \vec0)}$ (blue). Panels (a) and (b) correspond to $N =30$ while panels (c) and (d) correspond to $N = 50 $. Parameters are $\kappa = \omega = \gamma$. The dashed red vertical line indicates the position of the critical point.}
    \label{SzSSmpasN}
\end{figure}

\subsection{$2^{\mathrm{nd}}$-order DPTs}
In this section, we illustrate the validity of the spectral decomposition~\eqref{rhoApprox2ndOrder}, namely 
\begin{equation}
 \rho_{ss} (g, N) \approx \frac{\rho_+^{(\vec0, \vec0)}(g, N) + \rho_-^{(\vec0, \vec0)}(g , N)}{2},
 \label{rhoApprox2ndOrderLMG}
\end{equation}
for the second model considered in the companion Letter, which corresponds to Eq.~(\ref{Htot}) with $N_E = M_1 = 1$, with the system Hamiltonian
\begin{equation}
    H_S = \frac{V}{2N}\left(S_+^2 + S_-^2\right) + h S_z,
    \label{H_LMG_2}
\end{equation}
with the jump operator $L = S_x$ and the bath correlation function
\begin{equation}
    \alpha(\tau) = \frac{\gamma \kappa}{2 N } e^{-\kappa |\tau| -i\omega \tau}.
\end{equation} 

The enlarged Markovian description of this model reads
\begin{equation}
    \dot \rho_\mathrm{tot} = -i[H, \rho_\mathrm{tot}] + \kappa \, \mathcal{D}_{a}[\rho_\mathrm{tot}]
   \label{master_tot}
\end{equation}
with $H = H_S + \omega a^\dagger a + \sqrt{G}(a^\dagger L+  L^\dagger a)$ with $G = \gamma \kappa/{2N}$, where $a$ is the annihilation operator of the single pseudomode of the enlarged Markovian embedding. From this, it is clear that the model exhibits a $\mathbb{Z}_2$ symmetry represented by $\mathcal{U}_2 = U_2 \bcdot U_2^\dagger$ with $U_2 = e^{i\pi (S_z + a^\dagger a)}$. The superoperator $\mathcal{U}_2$ has two distinct eigenvalues $u_k = e^{ik\pi}=\pm 1$ with $k=0, 1$, so there are two symmetry sectors, with $L_{u_0}$ containing the steady state. 

In the companion Letter, we show that the Markovian limit $\kappa \to \infty$ is not leading to any DPT upon varying the parameter $g = V/\gamma$, but that deviations from it yield two consecutive 2$^{\mathrm{nd}}$-order DPTs separating three different phases labeled (I), (II), and (III) at two critical points $g_{c,1}$ and $g_{c,2}$, with the $\mathbb{Z}_2$ symmetry spontaneously broken in phases (I) and (III). 

Hence, Eq.~\eqref{rhoApprox2ndOrderLMG} should hold in phases (I) and (III), for large but finite $N$. However, note that the explicit construction of $\rho_\pm^{(\vec0, \vec0)}$ requires $\lambda_0^{(1)}$ to be real such that the associated eigenvector is Hermitian. The imaginary part of this eigenvalue should vanish after the critical point in the thermodynamic limit because of the SSB (see Sec.~\ref{sec::Spectral theory}), however finite-size effects cause the imaginary part to vanish further than right after the critical point. Therefore, we illustrate the validity of Eq.~\eqref{rhoApprox2ndOrderLMG} only in the region where the imaginary part is negligible. Figure~\ref{Decomp_LMG_2ndOrder} highlights the accuracy of Eq.~\eqref{rhoApprox2ndOrderLMG} as shown by the high fidelity between the steady state obtained by numerical diagonalization of $\mathcal{L}_\mathrm{HEOM}$ and the reconstructed steady state $(\rho_+^{(\vec0, \vec0)} + \rho_-^{(\vec0, \vec0)})/2$. Although the $\mathbb{Z}_2$ symmetry is broken in phases (I) and (III), it is clear that the \textit{way} it is broken is different. Indeed, the convergence is slower and less smooth in phase (III) than in phase (I). This cannot be attributed to numerical errors due to a finite truncation order $k_\mathrm{max}$ since $\braket{a^\dagger a} = 0$ in phase (III)~\cite{Debecker2024PRL}. However, this could be due to the fact that in phase (I) it is the symmetry $S_x \rightarrow -S_x,~a \rightarrow -a$ that is broken, while in phase (III) it is the symmetry $S_y \rightarrow -S_y$. Interestingly, the former symmetry involves both the system and the environment, while the latter is more concerned with the coherent part of the generator. In this context, Fig.~\ref{Decomp_LMG_2ndOrder} suggests that the  breaking of the $\mathbb{Z}_2$ symmetry through involving both the system and the pseudomode is less prompt to undergo large finite-size effects.  
 
\begin{figure}
   \centering
\includegraphics[width = 0.475\textwidth]{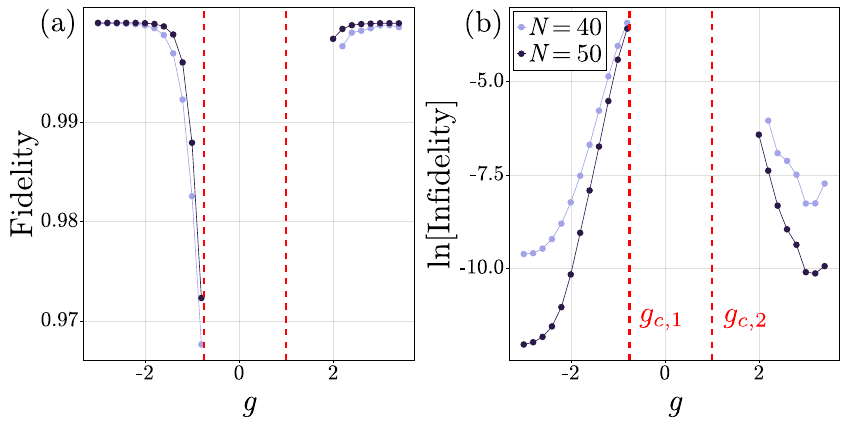}
   \caption{Fidelity (a) (see Eq.~\eqref{Fidelity_defintion}) and natural logarithm of the infidelity (b) (defined as $1$ - fidelity) between the reconstructed state~[Eq.~\eqref{rhoApprox2ndOrderLMG}] and the steady state $\rho_{ss}$ found by numerical diagonalization of $\mathcal{L}_\mathrm{HEOM}$ for the second model as a function of $g$ in phases (I) and (III) for  $N = 40, 50$. Panels (a) and (b) prove the validity of the decomposition~\eqref{rhoApprox2ndOrderLMG}. Note that we only show the region for which $\lambda_0^{(1)}$ is real ($\mathrm{Im}[\lambda_0^{(1)}]/\omega < 10^{-8}$). Parameters are $\omega = \kappa = 2 \gamma = 2 h$ and $k_\mathrm{max} = 9$, which yields $g_{c,1} = -3/4$ and $g_{c,2} = 1$.}
   \label{Decomp_LMG_2ndOrder}
\end{figure}

\section{Challenging two-mode Dicke model}

We now provide an example of a DPT that appears in a two-mode Dicke model described by the Lindblad master equation~\cite{Moodie2018, Palacino2020}
\begin{equation}
    \dot{\rho}_\mathrm{tot} = -i[H_\mathrm{tot}, \rho_\mathrm{tot}] + \kappa (\mathcal{D}_a[\rho_\mathrm{tot}] + \mathcal{D}_b[\rho_\mathrm{tot}]),
    \label{U1-ME}
\end{equation}
where $a, b$ are bosonic annihilation operators damped at rate $\kappa$ and
where
\begin{equation}
    H_\mathrm{tot} = \omega_0 S_z + \omega ( a^\dagger a +  b^\dagger b) + \frac{g}{\sqrt{N}}(a S_+ + b S_- + \mathrm{h.c.}).
    \label{U1-HAC}
\end{equation}
This model corresponds to a collective spin system coupled to two damped cavity modes, that is Eq.~(\ref{Htot}) with $N_E = 2$,  $M_1 = M_2 = 1$, the system Hamiltonian
\begin{equation}
    H_S = \omega_0 S_z,
    \label{HDicke}
\end{equation}
the jump operators $L_1 = S_-$, $L_2 = S_+$ and the bath correlation functions
\begin{equation}
    \alpha_1(\tau) = \alpha_2(\tau) = \frac{g^2}{N} e^{-\kappa |\tau| -i\omega \tau}.
\end{equation} 
The model~\eqref{U1-ME} exhibits a continuous $\mathbb{U}(1)$ symmetry described by $\mathcal{U}_1 = U_1 \bcdot U_1^\dagger$ with $U_1 = e^{i\alpha(S_z+a^\dagger a- b^\dagger b)}$ $(\alpha \in \mathbb{R})$. It undergoes a 2$^{\mathrm{nd}}$-order DPT associated with a breaking of the $\mathbb{U}(1)$ symmetry at the critical point
\begin{equation}
    g_c = \sqrt{\frac{\omega_0(\omega^2 + \kappa^2)}{2\omega}}.
\end{equation}
This critical point separates a normal phase with $|\braket{S_z}|/(N/2)= 1 $ for $g < g_c$ to a superradiant phase (where the symmetry is broken) with $|\braket{S_z}|/(N/2) < 1$ for $g > g_c$ as $N \to \infty$~\cite{Moodie2018}.

This model offers a challenge for two main reasons. First, because there are two pseudomodes to consider in the enlarged system made of the spin and the pseudomodes, which increases drastically the size of the corresponding Hilbert space while in the same time motivates the search for appropriate reduced descriptions of the spin dynamics. Second, as explained in the introduction, such reduced descriptions of spin dynamics have been studied and compared in~\cite{Palacino2020} with the mean-field results summarized above. However, so far, none of them were able to describe all aspects of the phase transition. Indeed, it has been shown that, by contrast with the single-mode Dicke model~\cite{Damanet2019}, a standard Redfield approach completely misses the DPT, while a 4$^{\mathrm{th}}$-order Redfield master equation (i.e., a 4$^\mathrm{th}$-order perturbative treatment of the interaction) captures the correct steady state and critical point but fails to predict the closing of the gap, a necessary condition for DPT. This suggests that the correct description of the low-lying spectrum of the generator of the dynamics requires including higher-order corrections in the interactions, i.e., deepen more in the non-Markovian regime. All of this makes this model an ideal candidate for testing and demonstrating the power of our general framework. In fact, our method captures all the features of the DPT and SSB, as shown in Fig.~\ref{2dOrder}, which shows the magnetization $\langle S_z \rangle$ (a), the closing of the gap $|\mathrm{Re}[\lambda_0^{(k >0)}]|$ (c,d) and of the imaginary part of $\lambda_0^{(k > 0)}$ (b).

\begin{figure}
    \centering
\includegraphics[width = 0.475\textwidth]{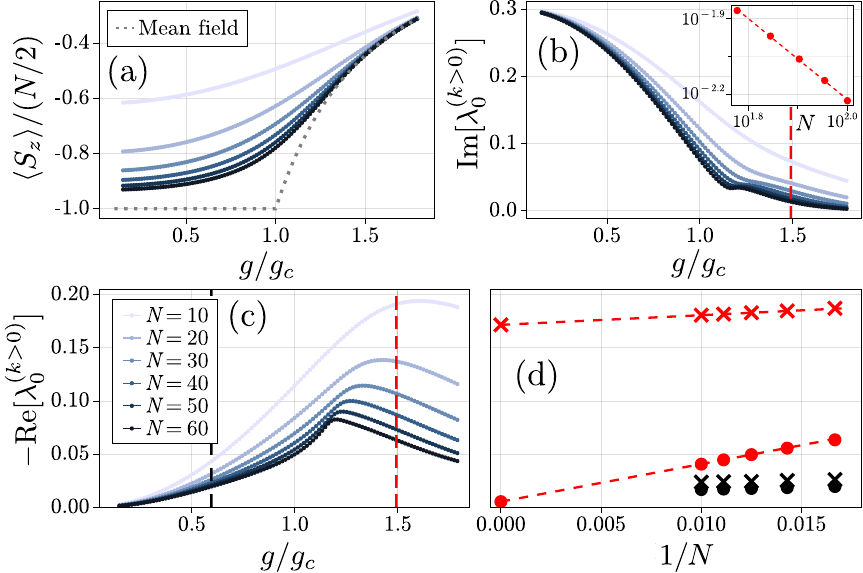}
    \caption{Signatures of the 2$^\mathrm{nd}$-order DPT for the two-mode Dicke model~(\ref{U1-HAC}) obtained from $\mathcal{L}_{\mathrm{HEOM}}$ for $\kappa = \omega = 5\omega_0$. (a): Steady state magnetization $\braket{S_z}$ as a function of $g$ for $N = 10$-$50~(k_\mathrm{max} = 7)$ and $60~(k_\mathrm{max} = 8)$. As $N$ increases, the curves get closer to the mean-field result (dotted line). (b) Imaginary part of the eigenvalue $\lambda_0^{(k>0)}$ with the largest real part among all symmetry sectors with $k >0$ as a function of $g/g_c$, confirming the SSB. The inset shows the scaling of $\textrm{Im}[\lambda_0^{(k>0)}]$ as a function of $N$ at $g/g_c = 1.49$. (c) $-\mathrm{Re}[\lambda_0^{(k>0)}]$ as a function of $g/g_c$ showing a decreasing gap in the superradiant phase as $N$ increases. The vertical dashed lines show $g/g_c = 1.49$ (red) and $g/g_c = 0.6$ (black) used in panel (d) to compare the scaling of $-\mathrm{Re}[\lambda_0^{(k>0)}]$ (circles) and of the Liouvillian gap of the $4^{\mathrm{th}}$ Redfield master equation of~\cite{Palacino2020} (crosses) as a function of $1/N$ [with $N = 60$-$90~(k_\mathrm{max} = 8$) and $100~(k_\mathrm{max} = 9)$]. In the normal phase (black), both methods are in good agreement, while in the superradiant phase (red), only $\mathcal{L}_{\mathrm{HEOM}}$ gives the expected closing. The points at $1/N = 0$ were extrapolated from a line defined by the two last points of our data.}
    \label{2dOrder}
\end{figure}

\section{Conclusion}
We developed a comprehensive framework for studying dissipative phase transitions in non-Markovian systems, which are more relevant experimentally. Our method is numerically exact, systematic, easily accessible (since it is based on the well-established HEOM technique available in open access libraries~\cite{Lambert2020Qutip, Huang2023heomjl}), and provides a considerable computational advantage over a standard embedding technique. 
As an example, we applied our theory to capture all the defining features of a challenging 2$^{\mathrm{nd}}$-order DPT with a continuous SSB for which other previous reduced descriptions had failed up to now~\cite{Palacino2020}. In our companion Letter, we showed that non-Markovian effects can not only reshape phase boundaries but also trigger phase transitions, illustrating the power and the opportunities that our framework offers for future works.

While our work focuses on $1^\mathrm{st}$ and $2^\mathrm{nd}$-order phase transitions, it would be interesting to determine the signatures of infinite order phase transitions, such as BKT transitions, in the spectrum of the HEOM Liouvillian. For such phase transitions, under the hypotheses of our Propositions \ref{prop1}-\ref{prop3}, the gap should not close only at a particular point, as this would imply a discontinuity of the order parameter. This is consistent with what is known in the literature for closed systems~\cite{Itzykson1989}, i.e., that the gap closes exponentially in a certain region around the critical point.
Another perspective could be to improve our method via hybridization with advanced numerical techniques, such as corner-space renormalization~\cite{Finazzi2015Corner} or matrix product operators (as in~\cite{Mascarenhas2015Matrix, Cui2015Variational, Flannigan2021, Moroder2022Strongly,link2023open}). Moreover, it would be interesting to include initial system-bath correlations~\cite{Ikeda2020Generalization}, but also the case of non-Lorentzian environments, i.e., where the correlation functions do not correspond to decaying exponentials and thus where Markovian embedding is not possible,  via the extended HEOM (eHEOM) technique~\cite{Tang2015Extended}. This would open opportunites for the study of critical behaviors~\cite{ Nagy2016Critical,Lundgren2020Nature, Ikeda2020Generalization}. 
Other interesting research directions include investigating whether non-Markovianity could be used as a tool to produce steady states that cannot be reached by Markovian dissipation alone~\cite{Ask2022} or to transform quasi-equilibrium steady state of dissipative systems (appearing, e.g., in cavity QED~\cite{Torre2013}) into genuine out-of-equilibrium ones.

\begin{acknowledgments}
We thank Jonathan Keeling, Peter Kirton, Valentin Link and Lukas Pausch for
helpful comments on a previous version of the manuscript. Computational resources were provided by the Consortium des Equipements de Calcul Intensif (CECI), funded by the Fonds de la Recherche Scientifique de Belgique (F.R.S.-FNRS) under Grant No. 2.5020.11 and by the Walloon Region. 
\end{acknowledgments}

\appendix

\section{Explicit matrix form of the HEOM Liouvillian}
\label{appExplicitform}
In this section, we show how one can construct the matrix representation of the HEOM Liouvillian. By vectorizing Eq.~\eqref{HEOM} thanks to the Choi-Jamiolkowski ismorphism ($\ket i\bra j\cong \ket i \otimes \ket j$~\cite{Zwolak2004}), we get
\begin{equation}
\label{eqrhonm}
 \begin{aligned}
    \dot{\dket{\rho^{(\vec{n}, \vec{m})}}}\;=&  -i\left[H_S\otimes \mathds{1} - \mathds{1}\otimes H_S^T - (\vec{w}^*\bcdot\vec{n} + \vec{w}\bcdot\vec{m})\right] \dket{\rho^{(\vec{n}, \vec{m})}} \\
    &+ \sum_{l=1}^{N_E}\sum_{j = 1}^{M_l} \left(G_{lj} n_{lj} L_l\otimes \mathds{1} \dket{\rho^{(\vec{n} -\vec{e}_{lj},\vec{m})}} \right.  \nonumber \\
    &\qquad\qquad+ \left. G_{lj}^* m_{lj} \mathds{1}\otimes L_l^* \dket{\rho^{(\vec{n}, \vec{m}-\vec{e}_{lj})}} \right)  \\
    &+ \sum_{l=1}^{N_E}\sum_{j = 1}^{M_l} \left[(\mathds{1}\otimes L_j^* - L_j^\dagger \otimes \mathds{1}) \dket{\rho^{(\vec{n}+\vec{e}_{lj}, \vec{m})}}\right. \\
     &\qquad\qquad+\left.(\mathds{1}\otimes L_l^* - L_l^\dagger \otimes \mathds{1})^\dagger \dket{\rho^{(\vec{n}, \vec{m}+\vec{e}_{lj})}} \right],
     \end{aligned}
\end{equation}
where $\dket{\rho^{(\vec{n}, \vec{m})}}$ denotes the vectorization of the matrices $\rho^{(\vec{n}, \vec{m})}$, $\mathds{1}$ the identity matrix acting on $\mathcal{H}_S$, and $L_l^*$ ($L_l^T$) the conjugate (transpose) matrix of $L_l$. 
By stacking in a vector $\dket{\rho}$ all the vectorized matrices $\dket{\rho^{(\vec{n}, \vec{m})}}$, we can construct the matrix $\mathcal{L}_{\mathrm{HEOM}}(k_{\mathrm{max}})$ and the equation above becomes Eq.~\eqref{def_LHEOM}, i.e.,
\begin{equation}
    \frac{d\dket{\rho}}{dt} = \mathcal{L}_{\mathrm{HEOM}}(k_{\mathrm{max}})\,\dket{\rho}.
    \label{def_Leff}
\end{equation}

For the sake of clarity, we now explicitly construct the different blocks of the matrix representation of the HEOM Liouvillian
for a unique environment $N_E = 1$ made of one damped pseudomode ($M = 1$). For this special case, the vectorized HEOM reads
\begin{equation}
 \begin{aligned}
   \dot{\dket{\rho^{(n, m)}}}
    ={} & D_{nm} \dket{\rho^{(n, m)}} \nonumber 
    +  A_n\dket{\rho^{(n-1,m)}}  + B_m\dket{\rho^{(n, m-1)}} \nonumber\\
    &+ C \dket{\rho^{(n+1, m)}} - C^\dagger
     \dket{\rho^{(n, m+1)}},
 \end{aligned}
 \label{SHEOM_1mode}
\end{equation}
where 
\begin{equation}
   \begin{split}
      D_{nm} &\equiv -i(H_S\otimes \mathds{1} - \mathds{1}\otimes H_S^T) \\
      &- ((n-m)i\omega + (n+m) \kappa) \mathds{1}\otimes \mathds{1}, \\
      A_n &\equiv G n~L\otimes \mathds{1},  
       \quad B_m \equiv G^* m~ \mathds{1}\otimes L^*, \\
       C &\equiv \mathds{1}\otimes L^* - L^\dagger \otimes \mathds{1}.
   \end{split} 
\end{equation}
Therefore, if $k_\mathrm{max} = 1$, the stacked vector $\dket{\rho}$ is given by $\dket{\rho} = (\dket{\rho^{(0, 0)}}, \dket{\rho^{(0, 1)}}, \dket{\rho^{(1, 0)}})^T$ and 
\begin{equation}
    \mathcal{L}_\mathrm{HEOM}(k_\mathrm{max} = 1) = 
    \begin{pmatrix}
    D_{00} & -C^\dagger & C \\
    B_{1} & D_{01} & 0 \\
    A_1 & 0 & D_{10}
    \end{pmatrix},
\end{equation}
while for $k_\mathrm{max} = 2$, we get $\dket{\rho} = (\dket{\rho^{(0, 0)}}, \dket{\rho^{(0, 1)}}, \dket{\rho^{(0, 2)}}, \dket{\rho^{(1, 0)}}, \dket{\rho^{(1, 1)}}, \dket{\rho^{(2, 0)}})^T$ and 
\begin{equation}
    \mathcal{L}_\mathrm{HEOM}(2) = 
    \begin{pmatrix}
    D_{00} & -C^\dagger & 0 & C & 0 & 0 \\
    B_1 & D_{01} & -C^\dagger & 0 & C& 0 \\
    0 & B_2 & D_{02} & 0 & 0 & 0 \\
    A_1 & 0 & 0 & D_{10} & -C^\dagger & C \\
    0 & A_1 & 0 & B_1 & D_{11} & 0 \\
    0 & 0 & 0 & A_2 & 0 & D_{20}
    \end{pmatrix}.
\end{equation}

\section{Proofs of the spectral properties of $\mathcal{L}_\mathrm{HEOM}$}
\label{Spectral_Proofs}
We provide here the proofs of the properties of the HEOM Liouvillian. To show that the spectrum of $\mathcal{L}_\mathrm{HEOM}$ is symmetric with respect to the real axis (property (i)), we note that $
   \left(\frac{d\rho^{(\vec{n}, \vec{m})}}{dt}\right)^\dagger = \frac{d}{dt}\rho^{(\vec{m},  \vec{n})} = \frac{d}{dt}(\rho^{(\vec{n},  \vec{m})})^\dagger$, where we used the property $(\rho^{(\vec{n}, \vec{m})})^\dagger = \rho^{(\vec{m}, \vec{n})}$~\cite{Link2022}, which
implies 
\begin{equation}
    \mathcal{L}_\mathrm{HEOM}[\rho^\dagger] = (\mathcal{L}_\mathrm{HEOM}[\rho])^\dagger.
    \label{Symmetry_real_axis}
\end{equation} 
The trace preserving property ((ii)) of $\mathcal{L}_\mathrm{HEOM}$ is immediate from Eq.~(\ref{HEOM}).This implies that 
\begin{equation}
    \begin{split}
   0 &{}=\frac{d\mathrm{Tr}[\rho^{(\vec{0}, \vec{0})}]}{dt} = \mathrm{Tr}\left[
    \frac{d}{dt}\rho^{(\vec{0}, \vec{0})}\right] = \mathrm{Tr}\left[\mathds{1}^{(\vec{0}, \vec{0})}~\mathcal{L}_{\mathrm{HEOM}}[\rho] \right] \\& =\dbraket{\mathds{1}^{(\vec{0}, \vec{0})}|\mathcal{L}_{\mathrm{HEOM}}|\rho}  \quad \forall\;\rho,
\label{0eigenvalue}
\end{split}
\end{equation}
where we used the Hilbert-Schmidt inner product $\dbraket{A|B} \equiv \mathrm{Tr}[A^\dagger B]$ and the projector onto the physical state space $\mathds{1}^{(\vec{0}, \vec{0})}$. Equation~\eqref{0eigenvalue} leads to $\dbra{\mathds{1}^{(\vec{0}, \vec{0})}}\mathcal{L}_{\mathrm{HEOM}} =0$, meaning that $\dbra{ \mathds{1}^{(\vec{0}, \vec{0})}}$ is a left eigenvector of $\mathcal{L}_{\mathrm{HEOM}}$ associated to the eigenvalue $0$. Therefore, the eigenvalue $0$ is always in the spectrum of $\mathcal{L}_{\mathrm{HEOM}}$ (property (iii)), which guarantees the existence of a stationary state. The fact that all the eigenvalues must have a negative real part in the limit $k_{\mathrm{max}} \rightarrow +\infty$ (property (iv)) comes from the fact that in this limit, the solution of Eq.~\eqref{HEOM} of the main text in the sector $(\vec{0}, \vec{0})$ is \textit{exactly} the reduced density operator of the system. Thus, any positive real part eigenvalues would lead to unphysical matrices in the sector $(\vec{0}, \vec{0})$, therefore contradicting our last statement. Lastly, to prove that $\mathrm{Tr}[\mathds{1}^{(\vec{0}, \vec{0})}\rho_i] = 0$ if $\rho_i$ is a right eigenoperator of $\mathcal{L}_{\mathrm{HEOM}}$ associated to the eigenvalue $\lambda_i$ with $\mathrm{Re}[\lambda_i] \neq 0$ (property (v)), we note that $\mathcal{L}_{\mathrm{HEOM}}$ preserves the trace in the sector $(\vec{0}, \vec{0})$ and $\rho_i(t) = e^{\mathcal{L}_{\mathrm{HEOM}}t}\rho_i \rightarrow 0$ for $t\rightarrow +\infty$ if $\mathrm{Re}[\lambda_i] \neq 0$ and $k_{\mathrm{max}} \rightarrow +\infty$.

\section{Convergence analysis and numerical efficiency} \label{Numerical_Efficiency}
The only parameter relevant to the convergence analysis of $\mathcal{L}_\mathrm{HEOM}$ is the truncation order $k_{\mathrm{max}}$. We introduce the following measures of convergence  
\begin{equation}
 \begin{split}
  &C_{k_\mathrm{max}}(O) \equiv \left|\mathrm{Tr}\left[\rho_{ss}(k_\mathrm{max})O - \rho_{ss}(k_\mathrm{max}+1)O\right]\right|, \\
  &S_{k_\mathrm{max}}(\lambda) \equiv |\lambda(k_\mathrm{max})-\lambda(k_{\mathrm{max}+1})|,
  \end{split}
  \label{measuresOfConvergence}
\end{equation}
to assess the convergence of $\mathcal{L}_{\mathrm{HEOM}}$ with respect to the steady state expectation value of a given operator $O$ or with respect to one of its eigenvalue $\lambda$, such as the HEOM Liouvillian gap. Here, $\rho_{ss}(k_\mathrm{max})$ is the steady state  of $\mathcal{L}_\mathrm{HEOM}(k_{\mathrm{max}})$ and similarly $\lambda(k_\mathrm{max})$ is $\lambda$ computed with $\mathcal{L}_\mathrm{HEOM}(k_\mathrm{max})$. Note that the convergence measure $C_{k_{\mathrm{max}}}(O)$ is a natural choice often chosen to study the convergence of hierarchy of equations~\cite{HOPS_Suess14}. The first part of this section is dedicated to the convergence analysis of $\mathcal{L}_\mathrm{HEOM}(k_\mathrm{max})$ while the second part shed light on the numerical advantage of $\mathcal{L}_\mathrm{HEOM}$ over enlarged Markovian systems. 
\subsection{Convergence analysis of $\mathcal{L}_\mathrm{HEOM}(k_\mathrm{max})$}
In Fig.~\ref{LMG_convergence}, we show the two measures of convergence~\eqref{measuresOfConvergence} for the LMG model for $O = S_z$ and $\lambda = \lambda_0^{(1)}$. As the hierarchy depth $k_\mathrm{max}$ increases, both measures of convergence $C_{k_\mathrm{max}}(S_z)$ and $S_{k_\mathrm{max}}(\lambda_0^{(1)})$ globally decrease, showing that the truncation order $k_\mathrm{max}$ can be used to control the numerical errors inherent to the $\mathcal{L}_\mathrm{HEOM}(k_\mathrm{max})$ scheme. A comparison of the panels (a) and (c) with the panels (b) and (d) indicates that errors scale up as $N$ increases. We also note that it is numerically more challenging to extract the spectral quantity $\lambda_0^{(1)}$ than the steady state expectation value $\braket{S_z}$, as indicated by the change in scale on the y-axis between panels (a) and (c) and (b) and (d). 

These general observations still hold for the $\mathbb{U}(1)$-symmetric Dicke model, as illustrated in Fig~\ref{U1Dicke_convergence}, which is the analog of Fig~\ref{LMG_convergence} but for the $\mathbb{U}(1)$-symmetric Dicke model. We note that both $C_{k_\mathrm{max}}(S_z)$ and $S_{k_\mathrm{max}}(\lambda_0^{(k>0)})$ increases as the coupling $g$ increases, highlighting the numerical challenge of the so-called strong coupling regime. Moreover, this observation combined with the fact that the computation of $\lambda_0^{(k>0)}$ is numerically more demanding that of $\braket{S_z}$ could explain why a fourth order Redfield master equation seems to capture the right steady state but predicts a non-vanishing gap~\cite{Palacino2020}. We indeed foresee that the spectrum of $\mathcal{L}_\mathrm{HEOM}$ converges faster for larger eigenvalues.
\begin{figure}
\centering\includegraphics[width = 0.475\textwidth]{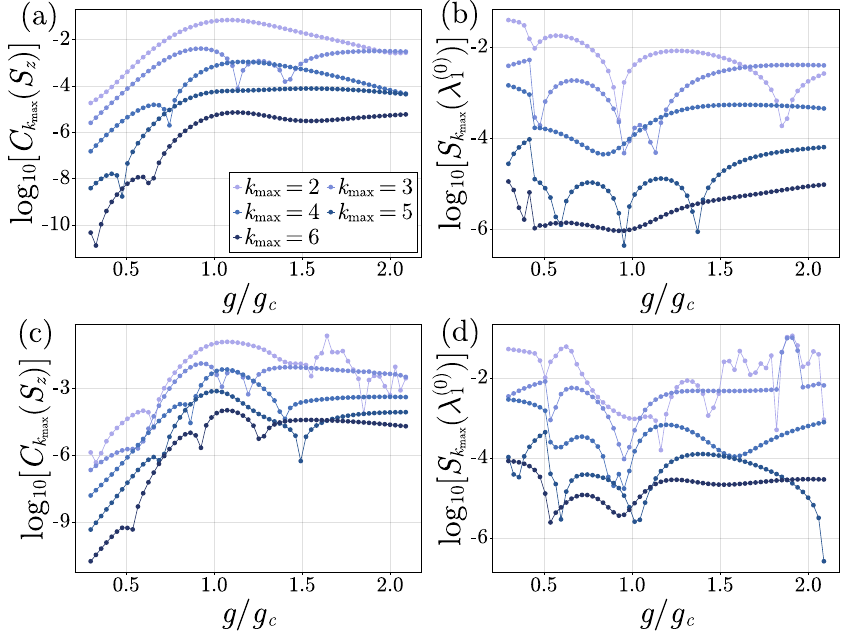}
    \caption{Measures of convergence $C_{k_\mathrm{max}}(S_z)$ and $S_{k_\mathrm{max}}(\lambda_1^{(0)})$ as defined by Eq.~\eqref{measuresOfConvergence} for the LMG model [Eqs.~(\ref{H_LMG}-\ref{alpha_LMG})] displayed in logarithmic scale (base 10) as a function of $g$. For all plots, the parameters are $\kappa = \omega = \gamma$ and $N=10$ for panels (a) and (c) and $N=20$ for panels (b) and (d).}
    \label{LMG_convergence}
\end{figure}
\begin{figure}
    \centering\includegraphics[width=0.475\textwidth]{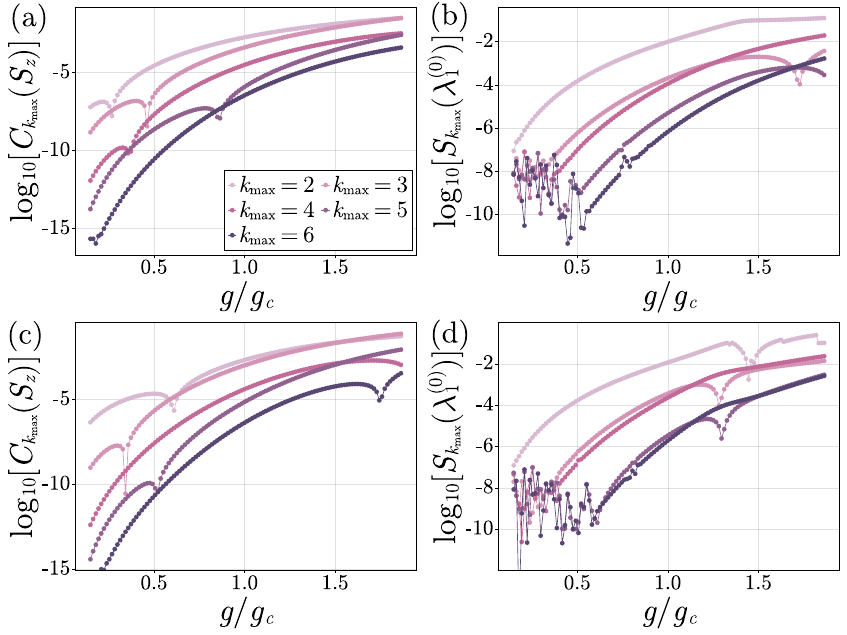}
    \caption{Measures of convergence $C_{k_\mathrm{max}}(S_z)$ and $S_{k_\mathrm{max}}(\lambda_0^{(k >0)})$ as defined by Eq.~\eqref{measuresOfConvergence} for the $\mathbb{U}(1)$-symmetric Dicke model discussed in Sec.~V displayed in logarithmic scale (base 10) as a function of $g$. For all panels, the parameters are $\kappa = \omega = 5\omega_0$ and $N=10$ for panels (a) and (b) and $N = 20$ for panels (c) and (d).}
    \label{U1Dicke_convergence}
\end{figure}

\subsection{Comparison with enlarged Markovian systems}

Let us illustrate the numerical advantage of our method to characterize DPTs over the standard technique of analysing the spectrum of the Liouvillian for the LMG model defined by Eqs.~(\ref{H_LMG}-\ref{alpha_LMG}). For this model, the Markovian Liouvillian superoperator $\mathcal{L}_M$ is defined through
\begin{equation}
\dot\rho_{\mathrm{tot}} = - i \left[H,\rho_\mathrm{tot}\right] + \kappa\left(2 a \rho_\mathrm{tot} a^\dagger - \{ a^\dagger a, \rho_\mathrm{tot}\} \right) \equiv \mathcal{L}_M[\rho_\mathrm{tot}].
\end{equation}
where $H = H_\mathrm{LMG} + \omega a^\dagger a + \sqrt{\frac{\gamma \kappa}{2N}}(S_- a^\dagger + S_+ a)$. 
As the dimension of $\mathcal{L}_M$ is infinite, one has to introduce a cutoff in order to determine the steady state of $\mathcal{L}_M$ numerically. We denote by $N_c$ and $\mathcal{L}(N_c)$ the effective dimension of the truncated Fock space of the pseudomode and the associated truncated Markovian Liouvillian. In order to compare $\mathcal{L}_M$ and $\mathcal{L}_\mathrm{HEOM}$, we fix a threshold of tolerance for the measures of convergence, namely $\epsilon = 0.0001$. We then choose $k_\mathrm{max}$ and $N_c$ accordingly: we take the first value  of $k_\mathrm{max}$ and $N_c$ that satisfy $ C_{k_\mathrm{max}}(S_z) < \epsilon$ and 
\begin{equation}
 C_{N_c}(S_z) \equiv \left|\tr(\rho_{ss}(N_c)S_z - \rho_{ss}(N_c+1)S_z)\right| < \epsilon,
\end{equation}
where $\rho_{ss}(N_c)$ is the steady state associated with $\mathcal{L}_M(N_c)$. We then compute the effective dimension of $\mathcal{L}_M$ and $\mathcal{L}_\mathrm{HEOM}$ for the truncation parameters $k_\mathrm{max}$ and $N_c$ previously determined. Figure~\ref{ComparisonLM/LNM}(a) shows that the ratio $\mathrm{dim}(\mathcal{L}_\mathrm{HEOM})/\mathrm{dim}(\mathcal{L}_M)$ is below 0.4 for all $g = V/\gamma$ and $N$ considered. Moreover, this ratio decreases with $N$, which shows that the $\mathcal{L}_\mathrm{HEOM}$ scheme is more suited for the study of DPTs for which one must consider the thermodynamic limit $N \rightarrow +\infty$. Let us finally mention that the generators $\mathcal{L}_\mathrm{HEOM}$ and $\mathcal{L}_M$ give the same results at the chosen tolerance threshold as illustrated in Fig.~\ref{ComparisonLM/LNM}(b).

\begin{figure}[h]
\centering\includegraphics[width = 0.475\textwidth]{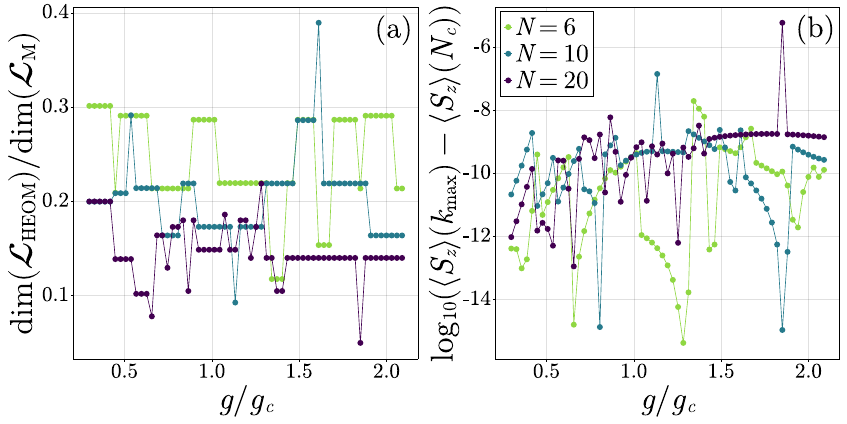}
    \caption{Comparison of the convergence of $\mathcal{L}_\mathrm{HEOM}$ and $\mathcal{L}_\mathrm{M}$ for the LMG model [Eqs.~(\ref{H_LMG}-\ref{alpha_LMG})]. (a): Ratio between the dimension of the HEOM generator $\mathcal{L}_\mathrm{HEOM}$ and the Markovian one $\mathcal{L}_\mathrm{M}$ as a function of $g$ for $\epsilon = 0.0001$, proving the numerical gain of using $\mathcal{L}_\mathrm{HEOM}$ instead of $\mathcal{L}_\mathrm{M}$ for the enlarged Markovian system. (b): Differences in logarithmic scale (base 10) between the steady state expectation value $\braket{S_z}$ computed with $\mathcal{L}_\mathrm{HEOM}(k_\mathrm{max})$ [resp.\ $\mathcal{L}_\mathrm{M}(N_c)$] denoted by $\braket{S_z}(k_\mathrm{max}$) [resp.\ $\braket{S_z}(N_c)$] for $\epsilon = 0.0001$. The two methods are in good agreement at the given tolerance.}
    \label{ComparisonLM/LNM}
\end{figure}

\bibliographystyle{apsrev4-2}
\bibliography{bib}

\end{document}